\newcommand{\Div}{\mbox{div}}
\newcommand{\Span}{\mbox{span}}
\newcommand{\bv}{{\bf b}}
\newcommand{\dv}{{\bf d}}
\newcommand{\ev}{{\bf e}}
\newcommand{\fv}{{\bf f}}
\newcommand{\gvv}{{\bf g}}
\newcommand{\hv}{{\bf h}}
\newcommand{\pv}{{\bf p}}
\newcommand{\qv}{{\bf q}}
\newcommand{\rv}{{\bf r}}
\newcommand{\uv}{{\bf u}}
\newcommand{\vvv}{{\bf v}}
\newcommand{\wv}{{\bf w}}
\newcommand{\yv}{{\bf y}}
\newcommand{\zv}{{\bf z}}
\newcommand{\Qm}{{\bf Q}}
\newcommand{\Nm}{{\bf N}}
\newcommand{\Mm}{{\bf M}}
\newcommand{\Am}{{\bf A}}
\newcommand{\Bm}{{\bf B}}
\newcommand{\Id}{{\bf I}}
\newcommand{\Vm}{{\bf V}}
\newcommand{\Wm}{{\bf W}}
\newcommand{\Sm}{{\bf S}}
\newcommand{\Km}{{\bf K}}
\newcommand{\Tm}{{\bf T}}
\newcommand{\cN}{{\cal N}}
\newcommand{\cM}{{\cal M}}
\newcommand{\cV}{{\cal V}}
\newcommand{\cQ}{{\cal Q}}
\newcommand{\cF}{{\cal F}}
\crefname{hypothesis}{Hypothesis}{Hypotheses}
\title{An iterative generalized Golub-Kahan algorithm for problems in structural mechanics }
\author{Mario Arioli \thanks{Libera Universita Mediterranea, Casamassima, Bari, Italy 
  (\email{arioli@lum.it}).}
\and Carola Kruse \thanks{Cerfacs, 29 Avenue Gaspard Coriolis, 31100 Toulouse, France
  (\email{carola.kruse@cerfacs.fr}, \email{ulrich.ruede@fau.de}).}
\and Ulrich R\"ude \footnotemark[3] \thanks{Friedrich-Alexander-Universit\"at Erlangen-Nuremberg, Cauerstr. 6, 91058 Erlangen, Germany (\email{ulrich.ruede@fau.de}). }  
\and Nicolas Tardieu \thanks{EDF R\&D, 7 Boulevard Gaspard Monge, 91120 Palaiseau, France
  (\email{nicolas.tardieu@edf.fr}).}}
\begin{document}

\maketitle

\begin{abstract}
This paper studies the Craig variant of the Golub-Kahan bidiagonalization algorithm
as an iterative solver for linear systems with saddle point structure. Such symmetric indefinite systems in 2x2 block form
arise in many applications, but standard iterative solvers are often found
to perform poorly on them and robust preconditioners may not be available. Specifically, such systems arise in structural mechanics, when a
semidefinite finite element stiffness matrix is augmented with linear multi-point constraints via Lagrange multipliers. Engineers often use such multi-point constraints to introduce boundary or coupling conditions into complex finite element models. The article will present a systematic convergence study of the Golub-Kahan algorithm
for a sequence of test problems of increasing complexity, including concrete structures enforced with pretension cables and the coupled finite element model of a reactor containment building.
When the systems are suitably transformed using
augmented Lagrangians on the semidefinite block and 
when the constraint equations are properly scaled, the
Golub-Kahan algorithm is found to exhibit excellent convergence that depends only
weakly on the size of the model.
The new algorithm is found to be robust in practical cases that are otherwise
considered to be difficult for iterative solvers. 
\end{abstract}

\begin{keywords}
iterative solvers, indefinite systems, saddle point, Golub-Kahan bidiagonalization, structural mechanics, multi-point constraints
\end{keywords}

\begin{AMS}
65F10, 65F08, 35Q74 
\end{AMS}

\section{Introduction}\label{sec:intro}

In structural mechanics, it is very common to impose kinematic relationships between degrees of freedom (DOF) in a finite element model. Rigid body conditions of a stiff part of a mechanical system or cyclic periodicity conditions on a mesh representing only a section of a periodic structure are typical examples of this approach. Such conditions can also be used to glue non-conforming meshes or meshes containing different types of finite elements. For example, we could link a thin structure modeled by shell finite elements to a massive structure modeled with continuum finite elements. These kinematic relationships are often called multi-point constraints (MPC) in standard finite element software and can be linear or nonlinear. In the case of a well-posed mechanical problem discretized with finite elements, the solution of the linearized problem can be expressed as the following constrained minimization problem
\begin{align}
\min_{\Am^T\wv = \rv} \frac{1}{2} \wv^T \Wm \wv - \gvv^T\wv, \label{eqn:minu}
\end{align}
where 
\begin{itemize}
\item[] \makebox[1.7cm]{\makebox[0.3cm]{$\Wm $\hfill} $\in \mathbb{R}^{m \times m}$\hfill} is the tangent stiffness matrix,
\item[] \makebox[1.7cm]{\makebox[0.3cm]{$\Am $\hfill} $\in \mathbb{R}^{m \times n}$\hfill} is the linearized matrix of the constraints,
\item[] \makebox[1.7cm]{\makebox[0.3cm]{$\wv$\hfill} $\in \mathbb{R}^{m}$\hfill} is the vector of nodal displacement unknowns,
\item[] \makebox[1.7cm]{\makebox[0.3cm]{$\gvv$\hfill} $\in \mathbb{R}^{m}$\hfill} is the volume force vector,
\item[] \makebox[1.7cm]{\makebox[0.3cm]{$\rv$\hfill} $\in \mathbb{R}^{n}$\hfill} is the data vector for inhomogeneous constraints.
\end{itemize}
With the introduction of Lagrange multipliers $\pv$, the augmented system that gives the optimality conditions for (\ref{eqn:minu}) reads
\begin{align}
\left[ \begin{array}{cc}
\Wm& \Am\\
\Am^T & 0
\end{array}\right]
\left[ \begin{array}{c}
\wv\\\pv
\end{array}\right]  =
\left[ \begin{array}{c}
\gvv\\ \rv
\end{array}\right]. \label{eqn:augsys}
\end{align}
In this article we assume that $\Wm$ is symmetric positive semidefinite, as it is typically the case when $\Wm$ arises from finite element models in structural mechanics. We additionally assume that
\begin{align}
\ker(\Wm)\cap \ker(\Am^T) = \left\{0\right\} \mbox{ and } \ker \Am = \left\{ 0\right\}. \label{eqn:WKerAt}
\end{align}
 To obtain a positive definite (1,1)-block in \cref{eqn:augsys}, a common method is to apply an \textit{augmented Lagrangian approach} as described by Golub and Greiff \cite{GoGr2003}. Let therefore $\Nm \in \mathbb{R}^{n\times n}$ be a positive symmetric definite matrix. Then we modify the leading block into


\begin{align}
\Mm:= \Wm + \Am \Nm^{-1} \Am^{T}.  \label{eqn:regular}
\end{align}
With the transformation 
\begin{align}
\begin{array}{lll}
\Mm &= &\Wm + \Am\Nm^{-1}\Am^T\\
\uv &= &\wv - \Mm^{-1}(\gvv - \Am\Nm^{-1}\rv)\\
\bv & = &\rv - \Am^T\Mm^{-1}(\gvv - \Am \Nm^{-1}\rv),
\end{array}
\label{eqn:trafo_semi_def}
\end{align}
\cref{eqn:augsys} is transformed into the equivalent system
\begin{align}
\left[
\begin{array}{cc}
\Wm + \Am \Nm^{-1} \Am^T & \Am\\
\Am^T & 0
\end{array}
\right]
\left[
\begin{array}{c}
\uv \\
\pv
\end{array}
\right]
=
\left[
\begin{array}{c}
0 \\
\bv
\end{array}
\right].\label{eqn:augsys_auglag}
\end{align}
This kind of regularization of the $(1,1)$-block is a common technique \cite{GoGr2003, BeGoLi2005, Ar2013}. It can also be applied when $\Wm$ is positive definite, with the goal that for a suitably chosen $\Nm$, we may find that \cref{eqn:augsys_auglag} becomes easier to solve than the original system. In the following, we will use the notation $\Mm$ for a positive definite matrix.\\
\\
The efficient solution of the above saddle point linear system \cref{eqn:augsys_auglag} has stimulated intensive research. One possible approach is to introduce the constraints on the continuous level, i.e. in the weak form of a PDE as with the mortar approach \cite{Bernardi1989}. In industrial software, when multi-point constraints are used, the constraints are however imposed on the already discretized equations. As it is furthermore usually not possible to make major modifications to an existing legacy code, any method of mortar-type becomes unfeasible. In this article, we will focus on the situation that the constraints are introduced on the discrete level, for which the solution of \cref{eqn:minu} remains a difficult task. We refer the reader to \cite{BeGoLi2005} for a comprehensive review of the topic. One of the commonly used methods is the Schur complement reduction technique, which requires an invertible (1,1)-block $\Mm$. It then has the advantage of solving two linear systems of size $m$ and $n$, instead of one system of size $m+n$. There is however the disadvantage that the Schur complement matrix $\Sm = -\Am^T \Mm^{-1} \Am $ may be dense and thus becomes expensive to solve. Krylov subspace methods for \cref{eqn:augsys_auglag} are reviewed in \cite{Saad2003}. In realistic finite element applications the saddle point matrix can be very poorly conditioned. As it is discussed in  \cite[section 3.5]{BeGoLi2005}, when the mesh size parameter $h$ goes to zero, the condition number of \cref{eqn:augsys_auglag} may increase. Krylov subspace methods will thus perform poorly with increasing problem size and rely on good preconditioning techniques. 
Another method to solve the saddle point system is based on an elimination technique \cite{Abel1979,jendele2009}. This strategy implies major modifications of the matrix of the linear system, whose profile can become much denser. Furthermore, the underlying algorithm is often sequential, where each constraint is treated one after the other. Consequently, this technique can not be used easily in a parallel framework. 
A different approach is used in \cite{stgeorges98}. The authors introduce a projector on the orthogonal of the kernel of the constraints matrix $\Am$ and solve the linear system on that subspace with an iterative method. This subspace projection technique is elegant and favorable convergence properties are shown. Unfortunately, the definition of the projector involves the factorization of the operator $\Am^T \Am$, which, in many practical cases, can be quite dense, causing the factorization to be expensive in time and space. Furthermore, one forward-backward substitution is needed at each iteration of the iterative method.\\
\\
In this paper we will focus on an iterative method for \cref{eqn:augsys_auglag} based on the Golub-Kahan bidiagonalization technique. We will find the iterates $\uv^k$ and $\pv^k$ separately, which requires to solve linear systems for $\Mm$ and for $\Nm$. We will show that for an appropriate choice of the matrix $\Nm$, the number of iterations required for convergence stays small and constant when the problem size increases. In particular, we will use this algorithm to solve problems in solid mechanics for which commonly used iterative solvers show a poor performance. Our test problems are generated by the finite element software code\_aster (\url{www.code-aster.org}). Code\_aster covers a wide range of physics including solid mechanics, thermics, acoustics, coupled thermo-hydro-mechanics and is also developed to numerically simulate critical industrial applications. It can treat steady-state and transient problems with various nonlinearities including frictional contact or complex constitutive laws. Code\_aster is developed since 1989 by one of the biggest electric utility companies in the world called EDF and is released as an open source software under GPL license since 2001. It is developed under Quality Insurance and it has been approved by the French (Autorit\'{e} de S\^{u}et\'{e} Nucl\'{e}aire) and English (Health and Safety Executive) Nuclear Regulatory Authorities to run numerical studies related to Nuclear Safety. The paper is organized as follows: We first introduce and review the Golub-Kahan bidiagonalization algorithm in \cref{sec:GGKB}. In \cref{sec:numexp}, we focus on models in structural mechanics and present a systematic convergence study. In \cref{sec:containment}, we will apply the proposed algorithm to a realistic industrial test case of a reactor containment building.

\section{The generalized Golub-Kahan bidiagonalization method}\label{sec:GGKB}

We will \, start by summarizing the main results of \cite{Ar2013} which are needed in our further discussion.
\subsection{Fundamentals of the Golub-Kahan bidiagonalization algorithm}
In the following, we will use the Hilbert spaces
\begin{align*}
 \mathcal{M} = \{{\bf v} \in \mathbb{R}^m: \|{\bf v} \|_{\Mm}^2 = {\bf v}^T \Mm {\bf v}\}, \hspace{0.3cm} \mathcal{N} = \{\qv \in \mathbb{R}^n: \|\qv \|_{\Nm}^2 = \qv^T \Nm \qv \}  
\end{align*}
and their dual spaces
\begin{align*}
 \mathcal{M}' = \{{\bf v} \in \mathbb{R}^m: \|{\bf v} \|_{\Mm^{-1}}^2 = {\bf v}^T \Mm^{-1} {\bf v}\}, \hspace{0.3cm} \mathcal{N}' = \{\qv \in \mathbb{R}^n: \|\qv \|_{\Nm^{-1}}^2 = \qv^T \Nm^{-1} \qv \}.  
\end{align*}
The scalar products for $\mathcal{M}$ and $\mathcal{N}$ are denoted by
\begin{align*}
 (\vvv_1,\, \vvv_2)_{\Mm} &= \vvv_1^T \Mm \vvv_2, &\forall \vvv_1, \vvv_2\in \mathcal{M},\\
  (\qv_1,\, \qv_2)_{\Nm} &= \vvv_1^T \Nm \qv_2, &\forall \qv_1, \qv_2\in \mathcal{N}.
\end{align*}
The respective scalar products in the dual spaces are given by
\begin{align*}
 (\vvv_1,\, \vvv_2)_{\Mm^{-1}} &= \vvv_1^T \Mm^{-1} \vvv_2, &\forall \vvv_1, \vvv_2\in \mathcal{M},\\
  (\qv_1,\, \qv_2)_{\Nm^{-1}} &= \vvv_1^T \Nm^{-1} \qv_2, &\forall \qv_1, \qv_2\in \mathcal{N}.
\end{align*}
Given $\qv \in \cM$ and $\vvv \in \cN$, we define the functional 
\begin{align}
 \cF : \cM\times \cN \rightarrow \mathbb{R}, \hspace{0.5cm} (q,v)\mapsto \dfrac{\vvv^T \Am \qv}{\|\qv\|_\Nm \; \|\vvv\|_\Mm}.\label{func}
\end{align}
The critical points of $\cF$ are the {\it elliptic singular values} and $\qv_i$,$\vvv_i$ are the {\it elliptic singular vectors} of $\Am$.
Indeed the saddle-point conditions for \cref{func} are
\begin{align}\label{GSVD}
\left\lbrace
\begin{array}{lcll@{}l}
\Am \qv_i &=& \sigma_i \Mm \vvv_i &\qquad \vvv_i^T \Mm \vvv_j &= \delta_{ij} \\
\Am^T \vvv_i &=& \sigma_i \Nm \qv_i  &\qquad \qv_i^T \Nm \qv_j &= \delta_{ij}
\end{array}
\right..
\end{align}
Hereafter, we assume that $\sigma_1 \ge \sigma_2 \ge \dots \geq \sigma_n > 0$.
If we operate a change of variables using $\Mm^{-\frac{1}{2}}$ and $\Nm^{-\frac{1}{2}}$,
 \begin{align}
 \left\{
  \begin{array}{l}
   \vvv = \Mm^{-1/2} x\\
   \qv = \Nm^{-1/2} y\\
  \end{array}
 \right.
 \end{align}
we have that the elliptic singular values are the standard singular values of
$$\tilde{\Am} = \Mm^{-1/2} \Am \Nm^{-1/2}.$$
The generalized singular vectors
 $\qv_i$ and $ \vvv_i$, $i = 1, \dots ,n$  are  the transformation by
$\Mm^{-1/2}$ and $\Nm^{-1/2}$ respectively of the left and right standard singular vector of $\tilde{\Am}$ \cite{Ar2013}.\\
\\
In \cite{GoKa1965, PaSa1982}, several algorithms for the bidiagonalization of a $m \times n$ matrix are presented.
All of them can be theoretically applied to $\tilde{\Am}$ and their generalization to $\Am$ is straightforward as shown
by Benbow \cite{Benbow1999}.
Here, we  will specifically analyze one of the variants known as the "Craig"-variant \cite{PaSa1982, Sa1995, Sa1997}. We seek the matrices $\Qm \in \mathbb{R}^{n\times n}, \Vm \in \mathbb{R}^{m\times m}$ and the bidiagonal matrix $\Bm$, such that the following relations are satisfied
\begin{align}
\left\lbrace
\begin{array}{r@{}c@{}ll@{}l}
\Am \Qm &=& \Mm \Vm \left[ \begin{array}{c}\Bm\\ 0\end{array}  \right] &\qquad \Vm^T \Mm \Vm &= \Id_m \\
&&\\
\Am^T \Vm &=& \Nm \Qm \left[  \Bm^T  ; 0 \right] &\qquad \Qm^T \Nm \Qm &= \Id_n
\end{array}
\right. \label{eqn:GKalg}
\end{align}
where
\begin{eqnarray*}
\Bm =
\left[ \begin{array}{ccccc}
\alpha_1 & \beta_1 &  0 & \cdots & 0 \\
0 & \alpha_2 & \beta_2 &   \ddots & 0 \\
\vdots &\ddots  & \ddots  & \ddots &\ddots  \\
0 & \cdots & 0 &\alpha_{n-1} & \beta_{n-1}   \\
 0 & \cdots &  0 &  0 & \alpha_n
\end{array}\right]  .
\end{eqnarray*}
We apply the above relations to the augmented system
\begin{align}
\left[ \begin{array}{cc}
\Mm& \Am\\
\Am^T & 0
\end{array}\right]
\left[ \begin{array}{c}
\uv\\\pv
\end{array}\right]  =
\left[ \begin{array}{c}
0\\ \bv
\end{array}\right]. \label{eqn:augsys_GKB}
\end{align}
By the change of variables
\begin{align}
\left\lbrace
\begin{array}{l}
\uv = \Vm \hat{\zv} \\
\pv = \Qm \hat{\yv}
\end{array}\right. \label{eqn:chvar}
\end{align}
and by multiplying the system from the left by
\begin{align*}
\left[
\begin{array}{cc}
\Vm^T & 0\\
0 & \Qm^T
\end{array}
\right],
\end{align*}
the augmented system can be transformed with \cref{eqn:GKalg} into
\begin{eqnarray*}
\left[ \begin{array}{ccc}
\Id_n   & 0              & \Bm\\
0         & \Id_{m-n} & 0 \\
\Bm^T & 0              & 0
\end{array}\right]
\left[ \begin{array}{c}
\hat{\zv}_1 \\ \hat{\zv}_2 \\ \hat{\yv}
\end{array}\right]  =
\left[ \begin{array}{c}
0 \\ 0 \\ \Qm^T \bv
\end{array} \right] .
\end{eqnarray*}
We see that $\hat{\zv} = (\hat{\zv}_1, \hat{\zv}_2) = (\hat{\zv}_1, 0)$. Consequently, $\uv$ only depends on the first $n$ columns of $\Vm$  and thus the system reduces to
\begin{align*}
\left[ \begin{array}{cc}
\Id_n   & \Bm\\     \Bm^T       & 0
\end{array}\right]
\left[ \begin{array}{c}
\hat{\zv}_1  \\ \hat{\yv}
\end{array}\right]  =
\left[ \begin{array}{c}
 0 \\ \Qm^T \bv
\end{array} \right] .
\end{align*}

To define a bidiagonalization algorithm, we choose the first vector $\qv_1$ in $\Qm^T \Nm \Qm$ as
\begin{align*}
\qv_1 = \Nm^{-1} \bv / \|\bv\|_{\Nm^{-1}}.
\end{align*}
A straightforward calculation then shows that 
\begin{align*}
\Qm^T \bv = \ev_1 \|\bv\|_\Nm.
\end{align*}
In \cite{Ar2013}, it is proved that denoting by $\zeta_j$ the entries of $\hat{\zv}$,  taking 
advantage of the recursive properties of the Golub-Kahan algorithm \cite{GoKa1965}, and using some of
the results of \cite{PaSa1982}, we can obtain a fully  recursive algorithm.
The final Golub-Kahan bidiagonalization algorithm is presented in Algorithm \ref{alg:GKB}.
\begin{algorithm}[H]
  \caption{Craig's variant algorithm}
  \label{alg:GKB}
  \begin{algorithmic}
  \REQUIRE{$\Mm , \Am , \Nm, \bv$, maxit}
  \STATE{$\beta_1 = \|\bv\|_{\Nm^{-1}}$;  $\qv_1 = \Nm^{-1} \bv / \beta_1$}
  \STATE{$\wv = \Mm^{-1} \Am \qv_1$; $\alpha_1 = \|\wv\|_{\Mm}$; $\vvv_1 = \wv / \alpha_1$}
  \STATE{$\zeta_1 = \beta_1 / \alpha_1$; $\dv_1=\qv_1/ \alpha_1$; $\pv^{(1)} = - \zeta_1 \dv_1$}
  \WHILE{convergence = false  and $k < $ maxit}
  \STATE{$k = k + 1$}
  \STATE{$\gvv = \Nm^{-1} \left( \Am^T \vvv_k - \alpha_k \Nm \qv_k  \right) $; $\beta_{k+1} = \|\gvv\|_{\Nm}$}
  \STATE{$\qv_{k+1} = \gvv / {\beta_{k+1}}$}
  \STATE{$\wv = \Mm^{-1} \left(  \Am \qv_{k+1} - \beta_{k+1} \Mm \vvv_{k} \right)$; $\alpha_{k+1} = \|\wv\|_{\Mm}$}
  \STATE{$\vvv_{k+1} = \wv / {\alpha_{k+1} }$}
  \STATE{$\zeta_{k+1} = - \dfrac{\beta_{k+1}}{\alpha_{k+1}} \zeta_k$}
  \STATE{$\dv_{k+1} = \left( \qv_{k+1} - \beta_{k+1} \dv_k \right) / \alpha_{k+1} $}
  \STATE{$\uv^{(k+1)} = \uv^{(k)} + \zeta_{k+1} \vvv_{k+1}$; $\pv^{(k+1)} = \pv^{(k)} - \zeta_{k+1} \dv_{k+1}$}
  \STATE{$\left[ \right.  $ convergence $\left. \right] $ = check$(\zv_k, \dots)$}
  \ENDWHILE
  \RETURN $\uv^{k+1}, \pv^{k+1}$
  \end{algorithmic}
\end{algorithm}
We highlight that, in the following, the values of $\zeta_k$, $\alpha_k$ and $\beta_k$ will be
always those as computed in \cref{alg:GKB}. Furthermore note that in each iteration two linear systems, one for $\Mm$ and one for $\Nm$ have to be solved. Furthermore, the Craig algorithm has an important property of minimization. Let $\cV = \Span\left\{\vvv_1,...,\vvv_k \right\}$ and $\cQ = \Span\left\{\qv_1,...,\qv_k \right\}$. At each step $k$, the \cref{alg:GKB} computes $\uv^{(k)}$ such that \cite{Sa1995}
\begin{align}
 \min_{\uv^{(k)}\in \cV, \,(\Am^T\uv^{(k)}-\bv)\perp \cQ} \|\uv - \uv^{(k)}\|_{\Mm}. \label{eqn:minprop}
\end{align}

\subsection{Convergence properties of the Golub-Kahan algorithm}\label{sec:genprob}
We now consider an augmented system with a positive definite (1,1)-block $\Wm$. We apply the augmented Lagrangian approach $\Mm = \Wm + \Am \Nm^{-1} \Am^{T}$ of \cref{eqn:regular}, where the matrix $\Nm$ corresponds to the one in \cref{eqn:GKalg}. With the transformation \cref{eqn:trafo_semi_def}, we arrive at an augmented system of the form \cref{eqn:augsys_auglag}. We follow the discussion in \cite{GoGr2003} and choose
\begin{align*}
\Nm = \frac{1}{\eta} \Id.
\end{align*}
For an appropriate choice of $\eta$, the following theorem states our main result on the convergence of the GKB method.
\begin{theorem}\label{thm:eta}
Let $\Mm = \Wm + \eta \Am \Am^T$ and $\Wm$ be positive definite matrices and $\lambda_1 \leq \dots \leq \lambda_n$ be the eigenvalues of $\Am^T \Wm^{-1} \Am$. \\
\begin{center}
If $\eta \geq \lambda_1^{-1} > 0$, then $\kappa(\tilde{\Am}) \leq \sqrt{2}$
\end{center}
\end{theorem}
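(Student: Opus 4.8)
The plan is to reduce the bound on $\kappa(\tilde{\Am})$ to an elementary scalar inequality about the spectrum of $\Cm := \Am^T\Wm^{-1}\Am$. Since here $\Nm = \tfrac{1}{\eta}\Id$, the reduced matrix is $\tilde{\Am} = \sqrt{\eta}\,\Mm^{-1/2}\Am$, so the squared elliptic singular values are the eigenvalues of
\[
\tilde{\Am}^T\tilde{\Am} = \eta\,\Am^T\Mm^{-1}\Am .
\]
Because the scalar $\eta$ cancels in the ratio, $\kappa(\tilde{\Am})^2 = \sigma_1^2/\sigma_n^2$ is exactly the ratio of the largest to the smallest eigenvalue of the symmetric positive definite matrix $\Am^T\Mm^{-1}\Am$, and the whole statement becomes a claim about that matrix.

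The key step is to express the spectrum of $\Am^T\Mm^{-1}\Am$ through the known eigenvalues $\lambda_1\le\dots\le\lambda_n$ of $\Cm$. I would apply the Sherman--Morrison--Woodbury identity to $\Mm^{-1} = (\Wm+\eta\Am\Am^T)^{-1}$, which yields
\[
\Am^T\Mm^{-1}\Am = \Cm - \Cm\big(\tfrac{1}{\eta}\Id+\Cm\big)^{-1}\Cm .
\]
The right-hand side is a function of $\Cm$ alone, hence it is diagonalized by the same eigenvectors; on the eigenvalue $\lambda_i$ it acts as $\lambda_i - \lambda_i^2/(\tfrac{1}{\eta}+\lambda_i) = \lambda_i/(1+\eta\lambda_i)$. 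As a cross-check, one reaches the same values from $\tilde{\Am}\tilde{\Am}^T = \Id-\Mm^{-1/2}\Wm\Mm^{-1/2}$ together with the fact that $\Wm^{-1}\Am\Am^T$ and $\Cm$ share their nonzero eigenvalues. Either way, $\sigma_i^2 = \eta\lambda_i/(1+\eta\lambda_i)$.

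Since $t\mapsto t/(1+\eta t)$ is increasing on $(0,\infty)$, the extreme singular values are attained at $\lambda_1$ and $\lambda_n$, so
\[
\kappa(\tilde{\Am})^2 = \frac{\sigma_1^2}{\sigma_n^2} = \frac{\lambda_n(1+\eta\lambda_1)}{\lambda_1(1+\eta\lambda_n)}.
\]
Writing $a=\eta\lambda_1$ and $b=\eta\lambda_n$ with $0<a\le b$, this equals $b(1+a)/\big(a(1+b)\big)$, and the target bound $\kappa(\tilde{\Am})^2\le 2$ is equivalent to $b(1-a)\le 2a$. Under the hypothesis $\eta\ge\lambda_1^{-1}$ we have $a\ge 1$, hence $b(1-a)\le 0\le 2a$ and the inequality holds; equality is only approached in the limit $a=1$, $b\to\infty$, showing that $\sqrt{2}$ is sharp. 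I expect the only genuine work to be the Woodbury manipulation and the monotonicity argument that pins $\lambda_1,\lambda_n$ as the extremal cases; everything after that is the one-line algebra above.
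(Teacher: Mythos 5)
Your proposal is correct and follows essentially the same route as the paper: both reduce $\kappa(\tilde{\Am})^2$ to the eigenvalues of $\eta\Am^T\Mm^{-1}\Am$, apply Sherman--Morrison--Woodbury to obtain $\sigma_i^2 = \eta\lambda_i/(1+\eta\lambda_i)$, and then bound the extreme ratio. The only (immaterial) difference is the final step, where the paper bounds $\mu_{\max}<1$ to get $\kappa^2 \le (1+\eta\lambda_1)/(\eta\lambda_1) \le 2$, while you keep the exact ratio and verify $b(1-a)\le 2a$, which additionally shows the constant $\sqrt{2}$ is sharp.
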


\begin{proof}
Let 
\begin{align*}
\sigma_1 \leq \dots \leq \sigma_n 
\end{align*}
be the elliptic singular values of $\Am$ with $\Mm$ and $\Nm$ norms as in \cref{GSVD}.
From \cref{GSVD} follows
\begin{align*}
 \eta \Am^T \Mm^{-1} \Am p_i = \sigma_i^2 p_i.
\end{align*}
Thus $\mu_i = \sigma^2_i$ are the eigenvalues of
\begin{align*}
\eta \Am^T \bigl(\Wm + \eta \Am \Am^T \bigr)^{-1} \Am.
\end{align*}
With the Sherman-Morrison formula, we obtain
\begin{align*}
 \eta \Am^T \bigl(\Wm + \eta \Am \Am^T \bigr)^{-1} \Am 
 = \eta \Am^T \Wm^{-1}\Am \bigl(\Id + \eta \Am^T \Wm^{-1} \Am \bigr)^{-1} 
\end{align*}
Let $\lambda_1 \leq \dots \leq \lambda_n$ be the eigenvalues of $\Am^T \Wm^{-1} \Am$. Then
\begin{align*}
\mu_i = \dfrac{\eta \lambda_i}{1 + \eta\lambda_i} \qquad \forall i.
\end{align*}
We obtain for the condition number of $\tilde{\Am} = \Mm^{-\frac{1}{2}}\Am \Nm^{-\frac{1}{2}}=\eta\bigl(\Wm + \eta \Am \Am^T \bigr)^{-\frac{1}{2}}\Am $
\begin{align*}
\kappa^2(\tilde{\Am}) = \dfrac{\mu_{max}}{\mu_{min}} \leq \dfrac{1+\eta \lambda_1}{\eta \lambda_1}
\end{align*}
It follows that if $\eta \geq \lambda_1^{-1}$, then $\kappa(\tilde{\Am}) \leq \sqrt{2}$.
\end{proof}
From the previous result, we can conclude that if we choose $\eta$ big enough, the condition number of $\tilde{\Am}$ is bounded by $\sqrt{2}$. In \cite[Section 4.2]{OrAr2017}, it is discussed that the standard Golub-Kahan bidiagonalization process applied to $\tilde{\Am} = \Mm^{-1/2}\Am \Nm^{-1/2}$ is equivalent to the generalized Golub-Kahan bidiagonalization applied to $\Am$. We can thus conclude from \cref{thm:eta}, that \cref{alg:GKB} exhibits excellent convergence properties and that only few iterations should be necessary to obtain sufficiently accurate results. As second desirable property, we can expect the number of iterations to be independent of the mesh size for problems coming from constrained FEM discretizations, as long as we choose $\eta$ big enough.
\\
\\
However, there is no such thing as a free lunch. In each iteration in \cref{alg:GKB}, we have to solve linear systems with the matrices $\Mm$ and $\Nm$. While  $\Nm^{-1} = \eta \Id$ is trivial, the condition number of $\Mm$ depends on $\eta$ and thus on the smallest eigenvalue of
$\Am^T \Wm^{-1} \Am$. The condition number of the resulting matrix $\Mm = \Wm + \eta \Am \Am^T$ could become very large for large $\eta$. The solution of the linear systems in \cref{alg:GKB} may thus become difficult, and additional numerical errors may be introduced. The possibly high condition number of $\Mm$ is especially problematic for large scale problems, when an inner direct solver is no longer applicable and an iterative solver is applied. It is thus crucial to find an optimal balance of $\eta$ to enable an efficient inner solution step. The numerical experiments suggest that in practice reasonable values of $\eta$ proportional to
$|| \Wm ||_1$ reduce $\kappa(\tilde{\Am})$ sensibly, without dramatically increasing the ill-conditioning of $\Mm$.

\subsection{Stopping criteria} \label{sec:stopcrit}
In the following, we summarize possible stopping criteria for the GK bidiagonalization algorithm as suggested in \cite{Ar2013}.
\subsubsection{A lower bound estimate}
First, we look at a lower bound estimate of the error in the energy norm. The error $\ev^{(k)} = \uv - \uv^{(k)}$ can be expressed using the \Mm-orthogonality property of $\Vm$ and \cref{eqn:chvar} by
\begin{align*}
\| \ev^{(k)} \|_{\Mm}^2 = \sum_{j=k+1}^n \zeta_j^2 = \Big|\Big| \hat{\zv} - \left[ \begin{array}{c}
\zv_k \\ 0
\end{array}\right] \Big|\Big|_2^2.
\end{align*} 
To compute the error $\ev^{(k)}$, we thus need $\zeta_{k+1}$ to $\zeta_{n}$, which are available only after the full $n$ iterations of the algorithm. Given a threshold $\tau < 1$ and an integer $d$, we can define a lower bound of $\| \ev^{(k)} \|_{\Mm}^2$ by
\begin{align} 
\xi_{k,d}^2 = \sum_{j=k+1}^{k+d+1} \zeta_j^2 < \| \ev^{(k)} \|_{\Mm}^2 .
\end{align}
$\xi_{k,d}$ measures the error at step $k-d$, but as the following $\uv^{(k)}$ minimize the error due to \cref{eqn:minprop}, we can safely use the last ones. Also, this lower bound estimate is very inexpensive to compute and it has additionally the advantage that it yields an upper bound for the residual in the dual norm defined by $\Nm^{-1}$
\begin{align*}   
\| \Am^T \uv^{(k)} - \bv \|_{\Nm^{-1}} =  | \beta_{k+1} \; \zeta_k |  \le \sigma_1  | \zeta_k | 
= \| \tilde{\Am} \|_2  | \zeta_k |< \| \tilde{\Am} \|_2\tau.
\end{align*}
With a carefully chosen $d$,  procedure ``{check$(\zv_k, \dots) $}'' in \cref{alg:GKB} can then be constructed as \cref{alg:lowbound}.
\begin{algorithm}[H]
\caption{Lower bound estimate}
\label{alg:lowbound}
\begin{algorithmic}
\REQUIRE $\zv_k , k , k, d, \tau$
\STATE convergence = false;
\IF {$k > d$}
\STATE $\xi^2 =\sum_{j=k-d +1}^{k} \zeta_j^2$;
\IF {$\xi \leq \tau$}
\STATE convergence = true;
\ENDIF
\ENDIF
\RETURN convergence
\end{algorithmic}
\end{algorithm}

\subsubsection{An upper bound estimate}
To define a stopping criterion for the GKB method, it is useful to also have an upper bound error estimate. Obviously, this estimate is more reliable than the previous lower bound. The following approach has been presented in \cite{Ar2013}. It is inspired by
the Gauss-Radau quadrature algorithm and similar to the one described in \cite{GoMeu2010}. Let therefore $\Tm = \Bm^T \Bm$. $\Tm$ is a non-negative, triagonal and positive definite matrix of entries

\begin{align*}
\left\{
\begin{array}{ll}
\Tm_{1,1} = \alpha_1^2, & \\
\Tm_{i,i} = \alpha_i^2 + \beta_i^2, & i = 2,..,n, \\
\Tm_{i,i+1} = \Tm_{i+1,i} = \alpha_i\beta_{i+1}, & i = 1,..,n, \\
0 & \mbox{otherwise}.
\end{array}
\right.
\end{align*}
With straightforward calculations, we have
\begin{align*}
 \| \ev^{(k)} \|_{\Mm}^2 = \sum_{j=k+1}^n \zeta_j^2 = \|b\|_{\Nm}^2  \left[  \left( \Tm^{-1}\right) _{1,1} - \left( \Tm_k^{-1}\right) _{1,1} \right],
\end{align*}
where $\Tm_k$ is the $k \times k$ principal submatrix of $\Tm$ \cite{GoMeu2010}. Let $0 < a < \sigma_n $   a lower bound for all the singular values of $\Bm$.
We compute the matrix $\hat{\Tm}_{k+1}$ as
\begin{eqnarray*}
\hat{\Tm}_{k+1} = \left[
\begin{array}{cc}
\Tm_k & \alpha_k \beta_k \ev_k\\
\alpha_k \beta_k \ev_k^T& \omega_{k+1}
\end{array}
\right] ,
\end{eqnarray*}
where $\omega_{k+1} = a^2 + \delta_k(a^2) $  and $\delta_k(a^2) $ is the $k$-entry of the solution of
\begin{align*}
\left( \Tm_k - a^2 \Id \right) \mathbf{\delta}(a^2)  = \alpha_k^2 \beta_k^2 \ev_k .
\end{align*}
We point out that the matrix $(\Tm_k-a^2\Id)$ is positive definite and that $\hat{\Tm}_{k+1}$ has one eigenvalue equal to $a^2$. Analogously to what is done in \cite{GoMeu2010} for the conjugate gradient method, we can recursively compute
$\delta(a^2)_k$ and $\omega_{k+1}$ by using the Cholesky decomposition. The pseudo-code for obtaining the upper bound estimate $\Xi$ is presented in \cref{alg:upbound}. It is a practical realization of a Gauss-Radau quadrature that uses the matrices $\hat{\Tm}_k$. Therefore, from \cite[Theorem 6.4]{GoMeu2010}, we can derive that $\Xi$ is an upper bound for $\|\ev^{(k)}\|_{\Mm}$. Although this upper bound estimate gives a reliable stopping criterion, its calculation is in practice very difficult to obtain owing to the need of an accurate estimate of the smallest singular value. In the following numerical experiments, we will use exclusively the lower bound stopping criterion. For any further details on error estimates and global bounds, we refer to \cite{Ar2013}.

\begin{algorithm}
\caption{checkUB}
\label{alg:upbound}
\begin{algorithmic}
\REQUIRE{$\zv_k, k, d, \tau, a,  \|b\|_{\Nm}, \Bm_k $}
\STATE{convergence = false;}
\IF {$k = 1$}
\STATE $\bar{d}_1 = \alpha_1^2 + \beta_1^2 - a^2$;
\ELSE
\STATE $\bar{d}_k = \alpha_k^2 + \beta_k^2 - \varpi_{k-1}$;
\ENDIF
\STATE $\varpi_k = a^2 + \dfrac{\alpha_k^2 \beta_k^2}{\bar{d}_k}$; $\;\;\varphi_k = \dfrac{\beta_k^2 \zeta_k^2}{\sqrt{ \bar{d}_k +a^2 - \beta_k^2}}$ ;
\IF {$k > d$}
\STATE $\xi^2 =\sum_{j=k-d +1}^{k} \zeta_j^2$;  $\qquad \Xi^2 = \xi^2 + \varphi_k$;
\IF {$\Xi \le \tau$}
\STATE convergence = true;
\ENDIF
\ENDIF
\RETURN convergence
\end{algorithmic}
\end{algorithm}

\section{Numerical Experiments} \label{sec:numexp}
In the following, we will apply the generalized GKB method to augmented matrix systems generated in the open source all-purpose finite element software code\_aster. In each test case, the models obey the laws of linear elasticity. We focus on the equilibrium of an elastic body under the small displacement hypothesis, for which the problem is to find the displacement field $\uv$ with $\uv:\bar{\Omega}\rightarrow\mathbb{R}^3$ such that
\begin{align}
 -\Div(\sigma(\uv)) &= \fv, &\mbox{ in } \Omega, \label{eqn:elas1}\\
 \sigma(\uv)n &= \hv, &\mbox{ on } \Gamma_N,\\
 \uv &= \uv_D, &\mbox{ on } \Gamma_D.
\end{align}
Here $\hv$ and $\uv_D$ are the Neumann and the Dirichlet data and the stress and strain tensors are defined as
\begin{align}
 \sigma(\uv)&= C\epsilon(\uv),\\
 \epsilon(\uv) &= (\nabla \uv +\nabla^T \uv)/2. \label{eqn:constlaw}
\end{align}
In the elastic case, $C$ is the fourth order elastic coefficient (or Hooke's law) tensor satisfying both symmetry and ellipticity conditions. Furthermore, the constitutive law (\ref{eqn:constlaw}) connects linearly $\sigma$ to the strain tensor field $\epsilon$.  Although we know the underlying physical model of the test cases, the following convergence analysis of the GKB algorithm is done only on matrix level. We thus refer the interested reader for any further details on the finite element discretization of \cref{eqn:elas1} to \cref{eqn:constlaw} used in code\_aster to \cite{Abbas2013}.\\
\\
The simulations in this section are done in Matlab. We will use the Matlab backslash solver for the inversion of $\Mm$ and $\Nm$ in \cref{alg:GKB}.

\subsection{Example: Cylinder} \label{ex1:cylinder}
As our first example, the domain $\Omega$ is chosen as a thick-walled cylinder as illustrated in \cref{fig:cyl}. The model is a classical linear elasticity system, as described above, with $m$ degrees of freedom approximated by a linear finite element method. Dirichlet boundary conditions are imposed on the left end and are shown in green. Furthermore, MPCs are applied to obtain a rigid inner ring, which is illustrated in \cref{fig:cyl} by the gray elements. For the derivation of the constraint equations, we refer to \cite{Pe2011}. These kinematic relationships ensure that the inner ring resists any kind of outer forces.
\\

\begin{figure}[htb]
\begin{center}
\includegraphics[width=9.0cm]{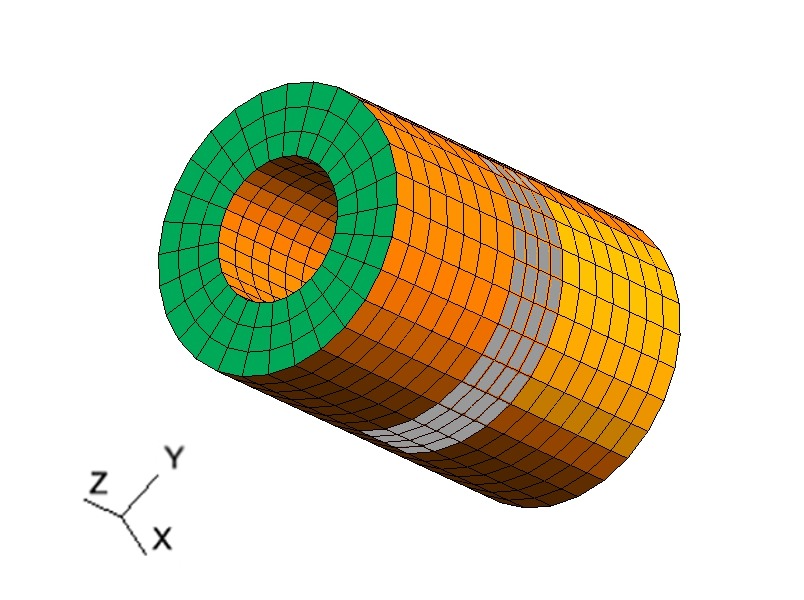}
\end{center}
\caption{Cylinder with rigid ring and Dirichlet boundary conditions.}
\label{fig:cyl}
\end{figure}

\subsubsection{Matrix setup} \label{sec:extr_data}
A double Lagrange multiplier approach \cite{Pe2011_01} is used in code\_aster  which leads to augmented systems with the structure

\begin{align}
\Km = \left(
\begin{array}{ccc}
\Wm & \gamma \Am & \gamma \Am\\
\gamma \Am^T & -\gamma I & \gamma I\\
\gamma \Am^T & \gamma I & -\gamma I
 \end{array}
\right).\label{eqn:doubleLg}
\end{align}
Here, $\Wm$ is the positive definite elasticity stiffness matrix, $\Am$ is the stiffness constraint matrix following the derivation in \cite{Pe2011} and $\gamma := \frac{1}{2}(\min\Wm_{ii}+\max\Wm_{ii})$ are multiplicative factors to equilibrate the scaling of the blocks.
After extraction of the matrices $\Wm$ and $\gamma \Am$, we thus get

 \begin{align}
 \left(
 \begin{array}{cc}
 \Wm & \gamma \Am \\
 \gamma \Am^T & 0
  \end{array}
 \right)
 \left(
 \begin{array}{c}
 \uv  \\
 \lambda
  \end{array}
 \right)
 =
 \left(
 \begin{array}{c}
 \gvv  \\
 0
  \end{array}
 \right). \label{eqn:extractedSystem}
\end{align}
The structure of the augmented system is shown in \cref{fig:mataugG}. Furthermore, we observe that the system \cref{eqn:extractedSystem} can be simplified by scaling it by $\gamma$. To exploit the result of \cref{thm:eta}, we modify the $(1,1)$-block as described in \cref{eqn:regular} and \cref{sec:genprob} to
\begin{align}
\Mm = \frac{1}{\gamma}\Wm + \eta \Am\Am^T \label{eqn:trafomat}
\end{align}
and transform \cref{eqn:extractedSystem} following \cref{eqn:trafo_semi_def} to obtain a system of type \cref{eqn:augsys_GKB}. The exact solutions are obtained by solving the original augmented system \cref{eqn:doubleLg} for a given right-hand side received from code\_aster, using the Matlab backslash solver. The delay parameter of \cref{alg:GKB} is chosen as $d=5$ and the tolerance as $\tau = 10^{-5}$.

\begin{figure}
\begin{center}
\includegraphics[width=9.0cm]{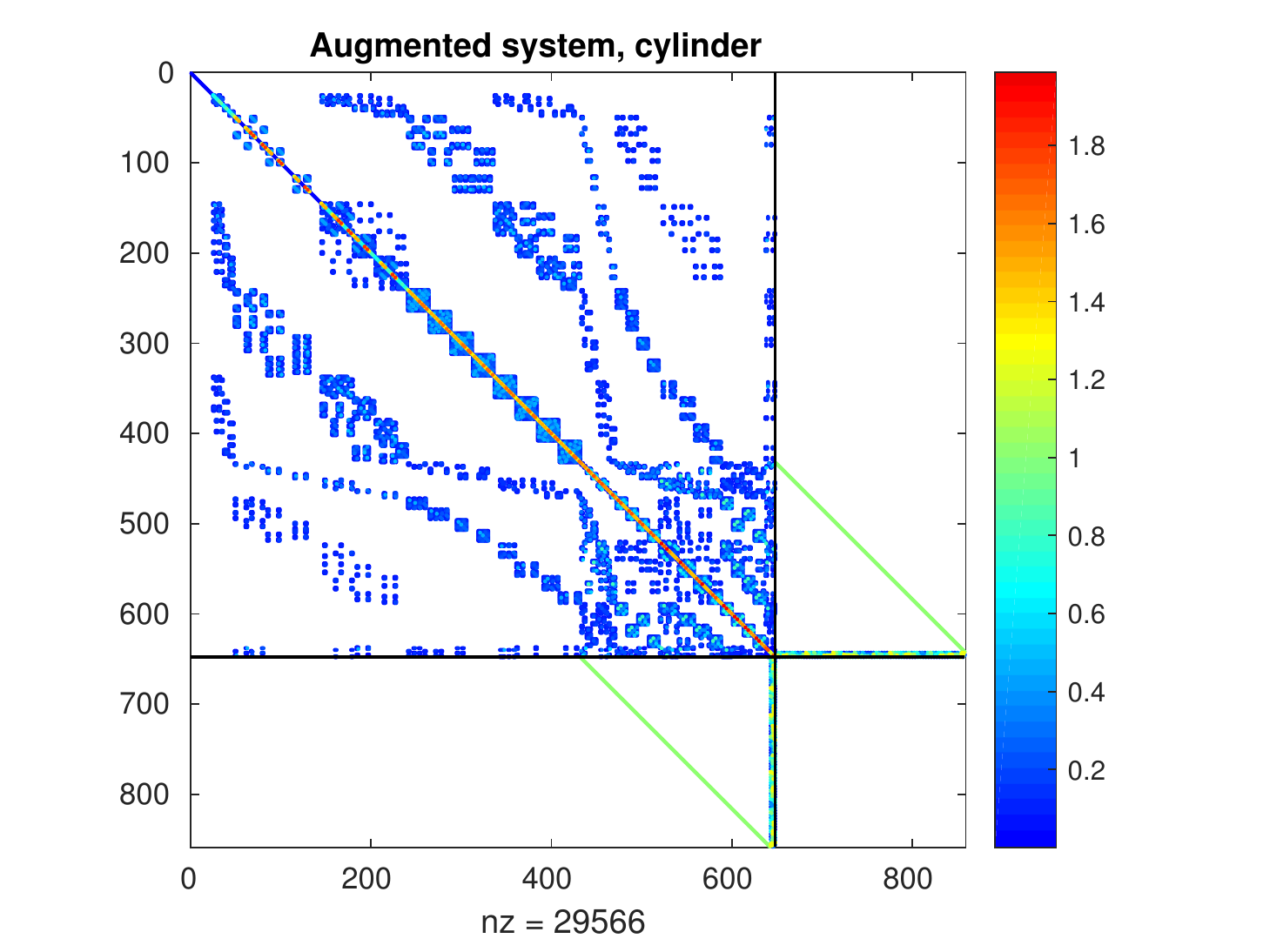}
\end{center}
\caption{Augmented matrix system for cylinder, Problem 1}
\label{fig:mataugG}
\end{figure}

\subsubsection{Results}
We define four test problems with increasing resolution. In \cref{tab:probsize}, the number of the degrees of freedoms can be found, where $m$ corresponds to the physical degrees of freedom, $n$ corresponds to the number of constraints and $nnz$ stands for the non-zero entries of the respective sparse matrices. We choose $\eta = \frac{1}{\gamma}\| \Wm \|_1$. The transformation \cref{eqn:trafomat} increases the number of nonzero entries, but the ratios still stay reasonably small. In \cref{tab:condnum1}, the condition numbers and norms of the occurring matrices are presented. The condition number of $\Mm$ does increase in $\eta$ (see \cref{sec:GGKB}).\\
\\
\begin{table}[htb]
\caption{Test problem sizes}
\begin{center}
\begin{tabular}{|c|c|c|c|c|c|}
\hline
name & $m$ & $n$ & $nnz(\Mm)$ & $nnz(\Am)$& $nnz(\Wm)$\\
\hline
Prob. 1       &648      &   210& 30080& 1259 & 28296 \\
Prob. 2 	  &2520    &   714& 147800& 4985 & 139636\\
Prob. 3 	  &6384 &      1674&  409246 &10045  & 392816\\
Prob. 4 & 46620 & 8814 & 3367462 & 26436& 3262086\\
\hline
\end{tabular}
\end{center}
\label{tab:probsize}
\end{table}

\begin{table}[htb]
\caption{Norms and condition numbers of matrices}
\begin{center}
\begin{tabular}{|c|c|c|c|c|}
\hline
name & $\eta = \frac{1}{\gamma}||\Wm||_1$& $\kappa(\Mm)$ &  $\kappa(\Wm)$& $||\Am||_1$\\
\hline
Prob. 1 & 9.13 &$ 8.3\cdot 10^5$ & $5.8 \cdot 10^3$& 6.27 \\
Prob. 2 & 8.95  & $7.1 \cdot 10^6$ & $1.9 \cdot 10^4$&5.79\\
Prob. 3 & 8.86  & $3.0 \cdot 10^{7}$ & $3.5 \cdot 10^4$&5.74\\
Prob. 4 &  8.96 & $5.0 \cdot 10^{8}$ & $1.2 \cdot 10^5$ & 5.34 \\
\hline
\end{tabular}
\end{center}
\label{tab:condnum1}
\end{table}
The convergence plots with upper and lower bound estimates of the GKB method are presented in \cref{fig:errUL12,fig:errUL34}. The error of the GKB solution obtains the required tolerance of $10^{-5}$ already after 6 iterations for the smallest problem and after 7, 8 and 9 for Problems 2 - 4 (see \cref{fig:errUL12,fig:errUL34}), respectively. The lower bound for the error at iteration $k$ is however computed only when iteration $k+d$ has been reached. Consequently, the GKB stops only after 11 to 14 iterations. This also explains why the final errors are remarkably smaller than the sought precision. We observe that although the number of DOF increases from Problems 1 to 4, the number of iterations increases by only 1 for each finer mesh and the algorithm stops after 14 iterations at most. To obtain a complete independence of the mesh size as it is shown in \cref{thm:eta}, $\eta$ would need to be chosen bigger. This will be discussed in the following section.

\begin{figure}
\begin{center}
\includegraphics[width=6.4cm]{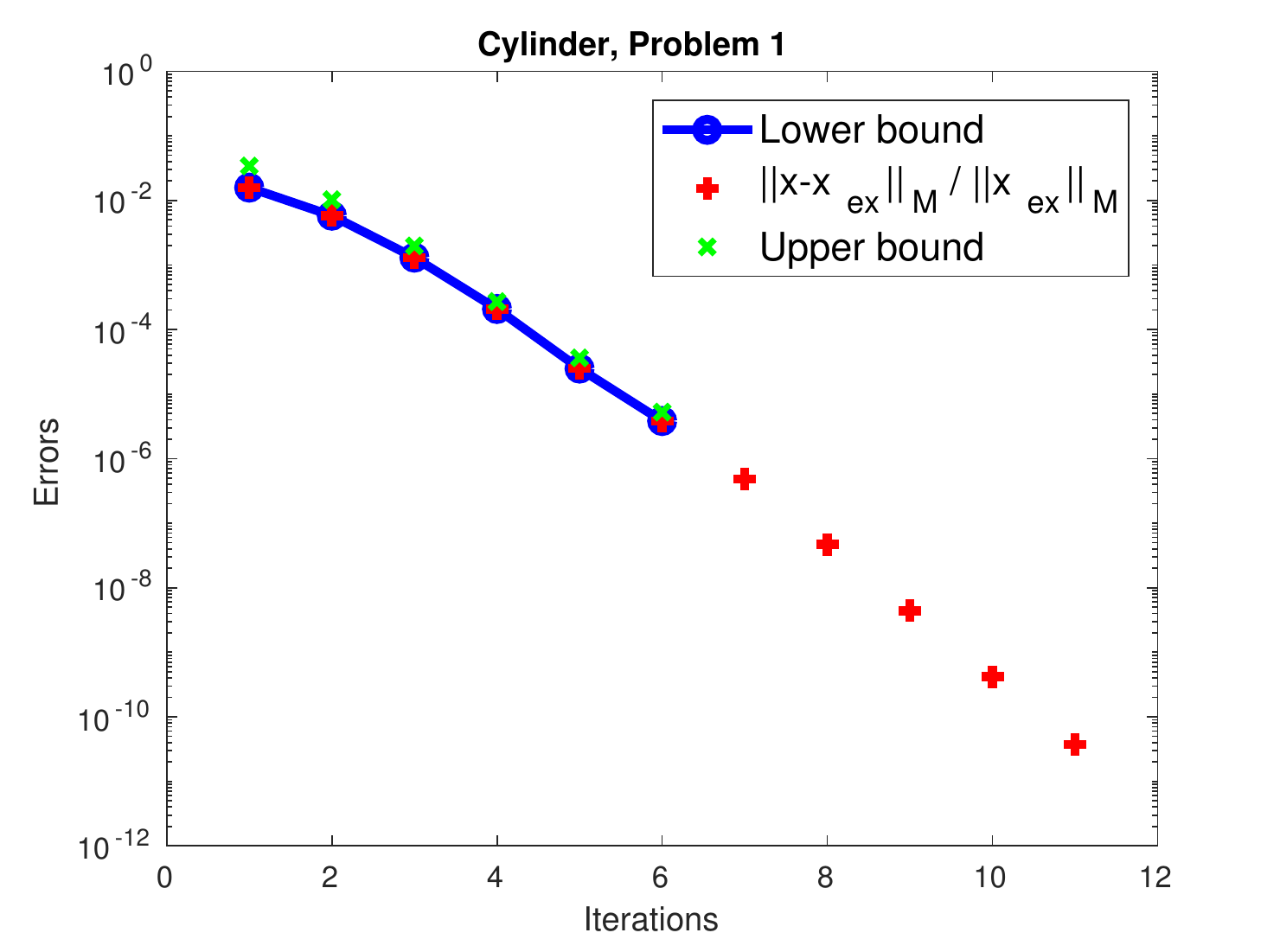}
\includegraphics[width=6.4cm]{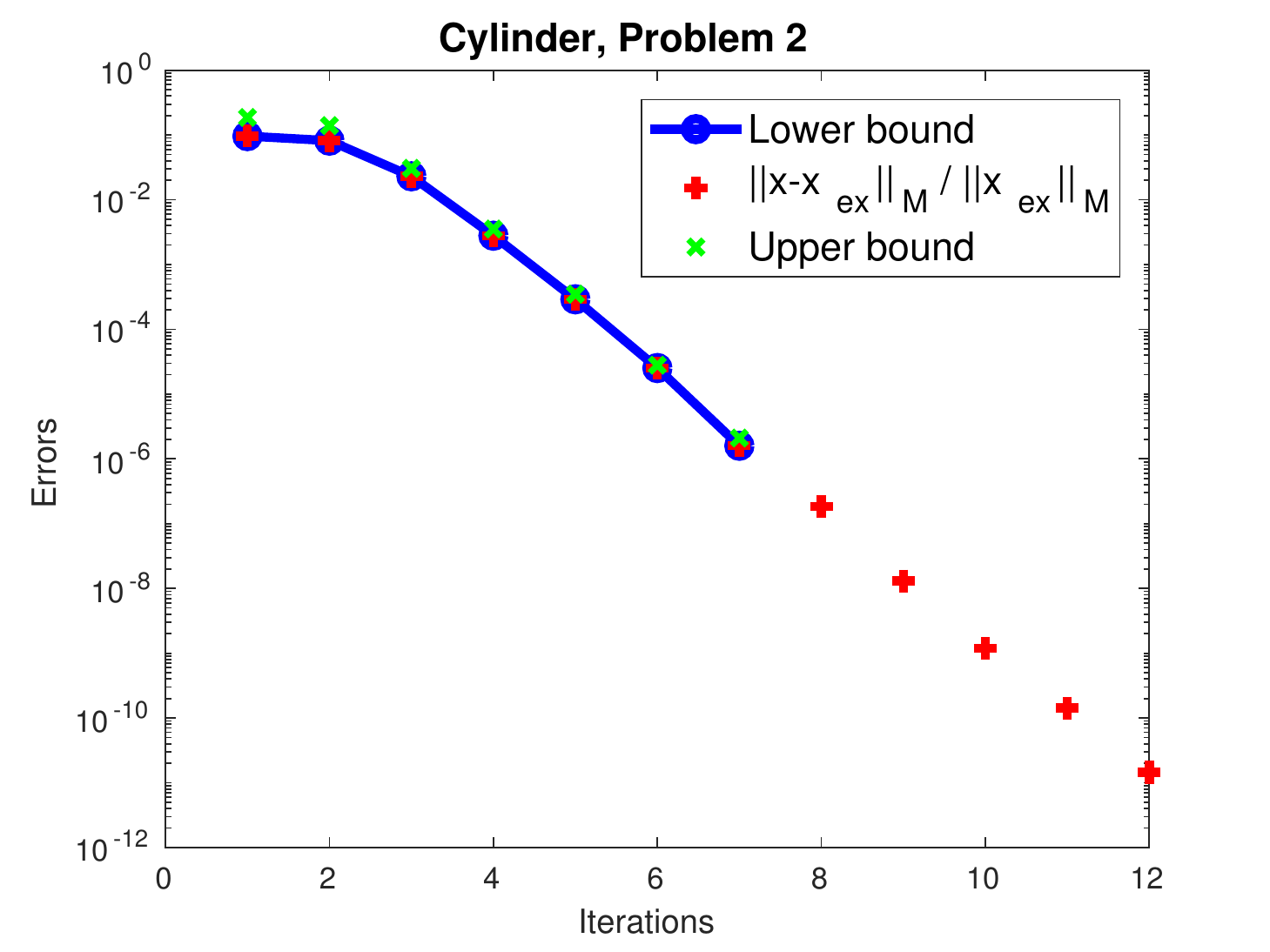}
\end{center}
\caption{Convergence of generalized GKB method for Problems 1 and 2.}
\label{fig:errUL12}
\end{figure}

\begin{figure}
\begin{center}
\includegraphics[width=6.4cm]{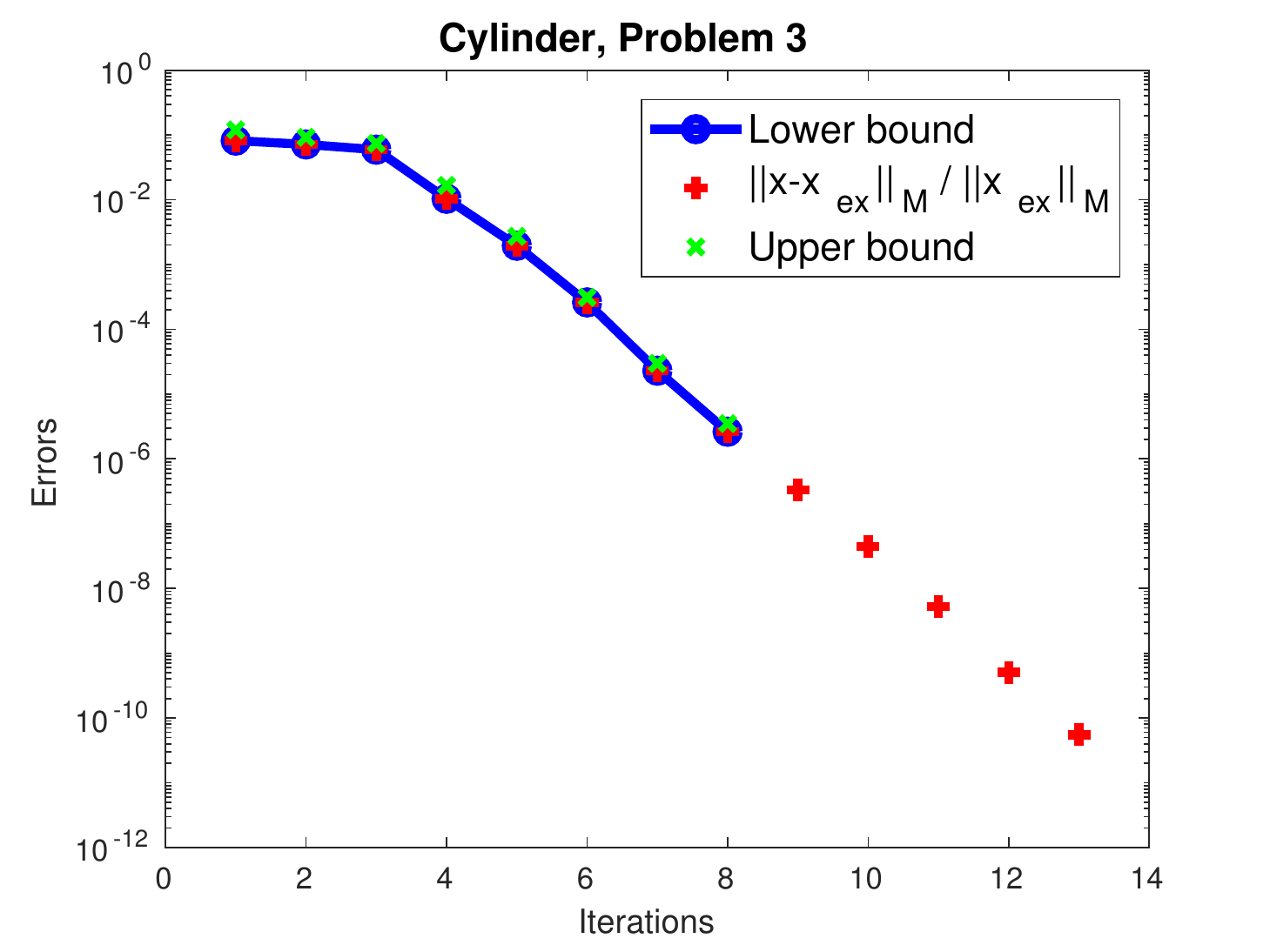}
\includegraphics[width=6.4cm]{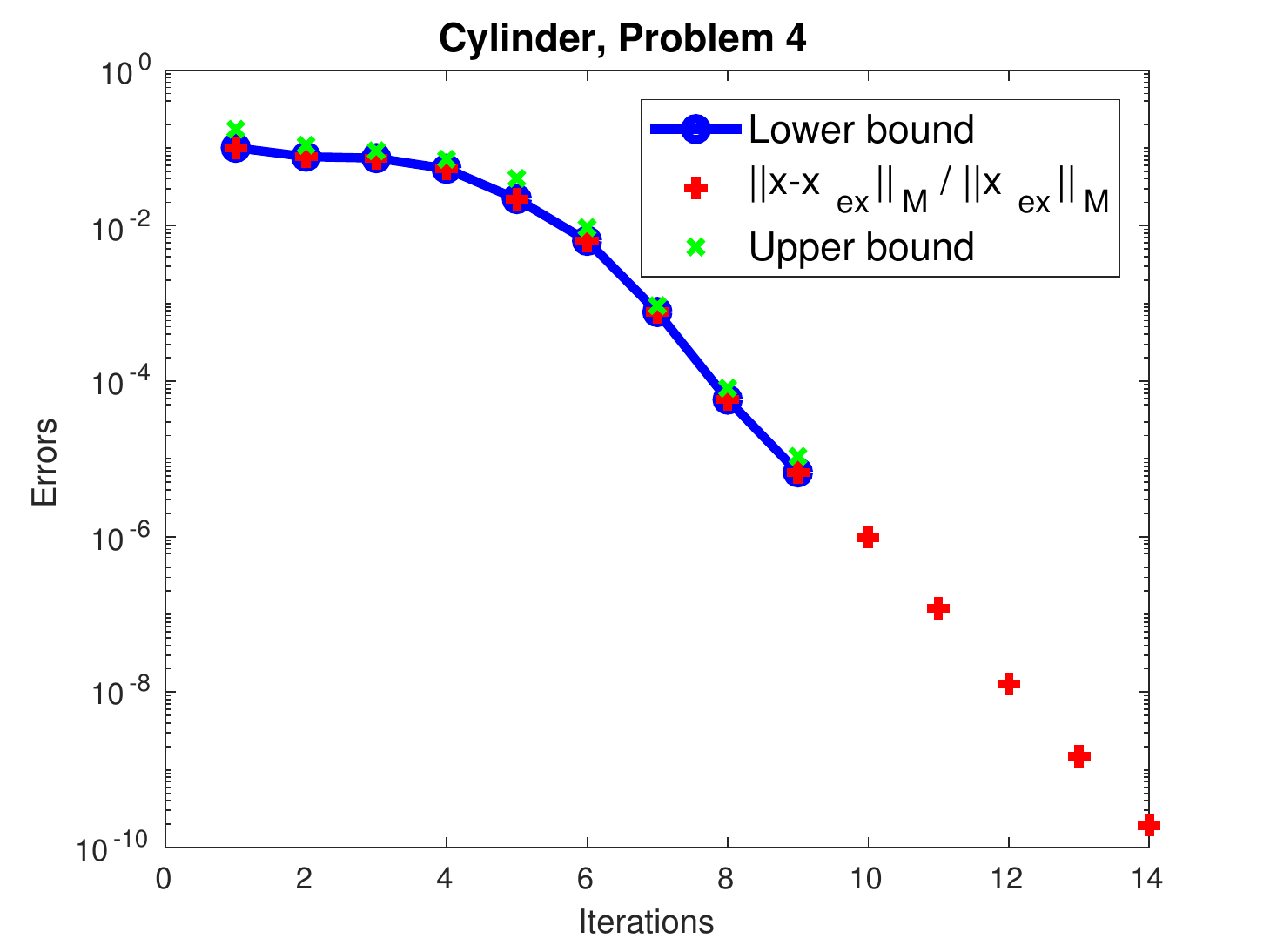}
\end{center}
\caption{Convergence of generalized GKB method for Problems 3 and 4.}
\label{fig:errUL34}
\end{figure}

\subsubsection{Choice of $\Nm$}
In the previous numerical examples, we choose the parameter $\eta = \frac{1}{\gamma}\|\Wm\|_1$ to better represent the energy subject to the MPC constraints, as described in the augmented system. 
The recommendation of Golub and Greiff in \cite{GoGr2003}, who found numerically that $\eta = \gamma \frac{\|\Wm\|}{\|\Am\|^2}$ could be a good value, leads to too small an $\eta$ for our practical examples. With this choice, we found that the number of iterations increases noticeably.
In \cref{thm:eta}, we proved that for $\eta \geq \lambda_1^{-1}$, the condition number of $\tilde{\Am}$ is bounded above and the number of iterations in \cref{alg:GKB} is independent of the mesh size of the finite element discretization. In general, we are not able to compute $\lambda_1$ of the saddle point system and thus obtain a more precise estimate of $\eta$.
For the smallest three test problems above, we are however able to determine $\lambda_1$ using Matlab and we can compare the previous choice to the optimal value.
From \cref{tab:condnum1,tab:condnum}, it seems that the choice of $\eta = \frac{1}{\gamma}\|\Wm\|_1$ leads to smaller values than needed for \cref{thm:eta}. Using $\eta$ as given in \cref{tab:condnum}, the number of iterations stays at 8 for Problems 2 and 3. 
A short study on the possible choice of $\eta$ in \cref{tab:prob4eta} suggests similar behavior for problem 4. Furthermore, \cref{tab:prob4eta} shows that the number of iterations decreases with increasing $\eta$ and that the modification of the (1,1)-block is a major factor determining the speed of convergence of the GKB method. Note that this behavior agrees with Theorem 2.1.
\\
\\

\begin{table}[htb]
\caption{Parameter $\eta$ and condition numbers of matrices}
\begin{center}
\begin{tabular}{|c|c|c|c|c|c|}
\hline
name & $\lambda_1 $&  $\eta$& $\kappa(\tilde{\Am})^2$ & iter & $\kappa(\Mm)$\\
\hline
Prob. 1 & 0.06 &  17 & 1.978 & 10 & $1.5 \cdot 10^6$ \\
Prob. 2 & 7.5e-3 &  133 &1.995 & 8 & $1.0 \cdot 10^8$\\
Prob. 3 & 2.8e-3& 357 & 1.995& 8 & $1.2 \cdot 10^9$ \\
\hline
\end{tabular}
\end{center}
\label{tab:condnum}
\end{table}

\begin{table}[htb]
\caption{Different choices for $\eta$ for problem 4,  $\epsilon_{GKB}$=1e-5 and $d=5$}
\begin{center}
\begin{tabular}{|c|c|c|c|}
\hline
 $\eta$& \#iter & $\frac{\|\uv - \uv_{dir}\|_{\Mm}}{\|\uv_{dir}\|_{\Mm}}$ & $\frac{\|\pv - \pv_{dir}\|_{2}}{\|\pv_{dir}\|_{2}}$\\
\hline
 0 (\Mm=\Wm) &327 & $8.83 \cdot 10^{-6}$ &$7.09 \cdot 10^{-6}$\\
 1 & 29 & $1.02 \cdot 10^{-7}$  & $8.98 \cdot 10^{-8}$ \\
 17 & 13  & $7.59 \cdot 10^{-11}$& $1.35 \cdot 10^{-10}$ \\
133 & 9  & $3.41 \cdot 10^{-10}$ &  $2.53 \cdot 10^{-10}$\\
357& 8  & $4.57 \cdot 10^{-10}$  & $7.88 \cdot 10^{-10}$ \\
 \hline
\end{tabular}
\end{center}
\label{tab:prob4eta}
\end{table}

\subsection{Example: Prestressed concrete}\label{sec:prest_concrete}

As our second set of examples, we consider a simple model of a concrete block with embedded pretension cables. The block is clamped on its lateral faces and submitted to a constant pressure on its top face. All materials are elastic. \Cref{fig:PSB:simpleModel} presents a projected view to the 2D surface: the orange points are the concrete nodes and the gray points are the cable nodes. The cable nodes are only constrained by linear relationships with the concrete nodes, so that the displacement of the cables included in a given concrete element is a linear combination of the displacement of the concrete nodes, $ \uv_{\mbox{cables}}=\sum_{i=0}^{4} a_i \uv^x_{\mbox{concrete}} + b_i \uv_{\mbox{concrete}}^y$. The vectors $a$ and $b$ are the barycentric coordinates of the cable node with respect to the concrete element \cite{Pe2011}.

\begin{figure}[h]
\begin{center}
\includegraphics[scale=0.2]{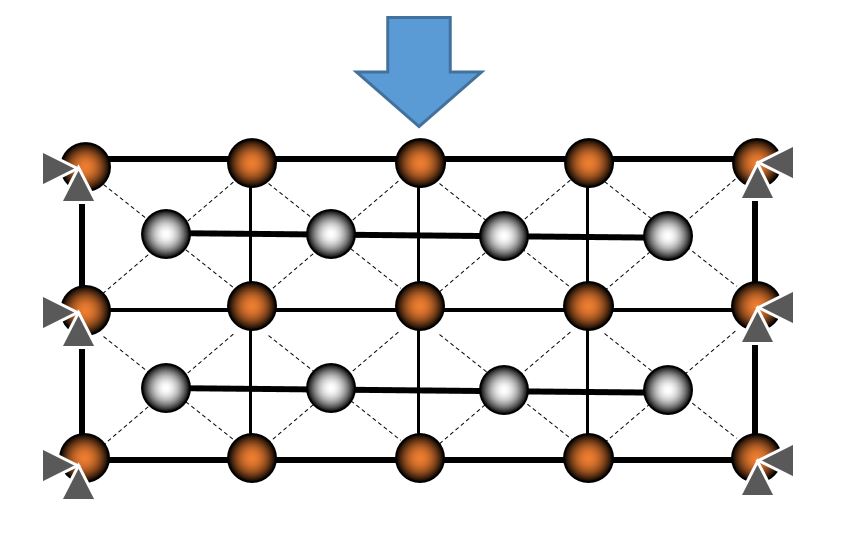}
\end{center}
\caption{Simple model of prestressed concrete}
\label{fig:PSB:simpleModel}
\end{figure}

We first extract the $\Wm$ and $\Am$ submatrices as already described in \cref{ex1:cylinder}. For the purpose of illustrating the particular matrix structure, we apply a permutation to sort the matrix entries in the (1,1)-block with respect to the size of the diagonal elements of $\Wm$, starting from the smallest to the largest. Second, we apply a column permutation to the constraint block $\Am$ (and the respective row permutation for $\Am^T$) to obtain the diagonal part in the upper $n\times n$ block, as shown in \cref{fig:PSB:matrix}. The augmented matrix exhibits particular features. The (1,1)-block contains rows and columns with only zero entries. However, the non-singularity of the full system \cref{eqn:augsys} is ensured, since \cref{eqn:WKerAt} is satisfied. 

\subsubsection{Numerical experiments}
Owing to the singular (1,1)-block, the GKB algorithm as introduced in \cref{sec:GGKB} cannot be directly applied to this problem class. We thus rely on the augmented Lagrangian approach and choose $\eta = \|\Wm\|_{1}$. \cref{eqn:WKerAt} now ensures that the (1,1)-block of the augmented system is non-singular. However, also for these shifted matrices, we do not obtain satisfactory results with the GKB algorithm because of the unfavorable scaling of the matrices when generated by code\_aster. The algorithm converges, the solution however exhibits oscillations. As described in \cref{sec:extr_data}, the constraint matrices are multiplied by the factor $\gamma = \frac{1}{2}(\min\Wm_{ii}+\max\Wm_{ii})$ to obtain a good equilibrium of the augmented system. We undo this multiplication in our numerical experiments and divide the augmented system \cref{eqn:extractedSystem} by $\gamma$. The right-hand sides are provided by code\_aster and the exact solutions are obtained for comparison by solving \cref{eqn:doubleLg} with a direct solver.\\
\\
Numerical results are presented in \cref{tab:precst_concrete}. We use the lower bound estimate as stopping criterion and choose the tolerance as $\tau = 10^{-5}$ and $d = 5$. The algorithm shows excellent convergence properties. Although the result of \cref{thm:eta} is not applicable to this case, the number of iterations until convergence stays constant at 8 and is bounded with increasing problem size. Indeed, the energy error is already smaller than the tolerance after only 3 iterations, but we recall that the lower bound estimate for the iterate $\uv^3$ is only computed at iteration $3+d$. The bound for the smallest singular value of $\Bm$, necessary for the upper bound estimate, has been obtained experimentally as $a = 0.2$. The convergence of the energy error and the lower and upper bound estimates are presented in \cref{fig:PSB:errUL12,fig:PSB:errUL3}.

\begin{table}
\caption{Example prestressed concrete: Golub-Kahan convergence for $\epsilon_{GKB}$=1e-5 and $d=5$}
\begin{center}
\begin{tabular}{|c|c|c|c|c|c|c|}
\hline
name & m & n & \#Iter & $\frac{\| \uv - {\uv}^{(k)} \|_{\Mm}}{\| \uv \|_{\Mm}} $ & $\frac{\| \uv - {\uv}^{(k)} \|_{2}}{\|\uv\|_2}$ &  $\frac{\| \pv -{\pv}^{(k)} \|_{2}}{\|\pv\|_2}$\\
\hline
Prob1 & 498 & 258 & 9 & 9.6e-13 & 9.5e-13  & 2.0e-12\\
Prob2 & 3207 & 1590 & 9 & 3.2e-12 & 3.1e-12 & 9.2e-12\\
Prob3 & 23043 & 11382 & 9 & 5.0e-11 & 5.0e-11 &  4.9e-11\\
\hline
\end{tabular}
\end{center}
\label{tab:precst_concrete}
\end{table}

\begin{figure}
\begin{center}
\includegraphics[width=9.0cm]{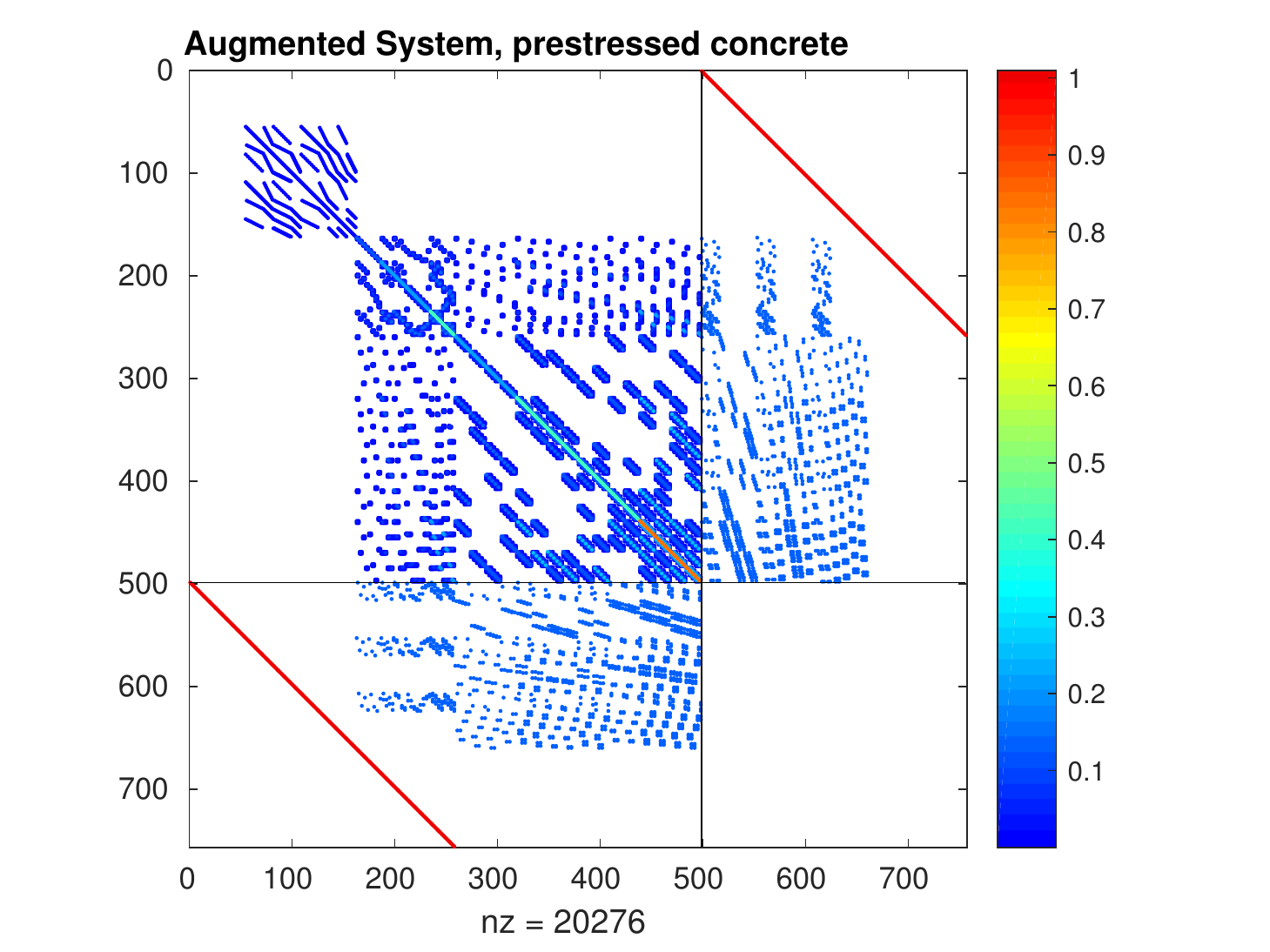}
\end{center}
\caption{Augmented system for prestressed block example.}
\label{fig:PSB:matrix}
\end{figure}

\begin{figure}
\begin{center}
\includegraphics[width=6.4cm]{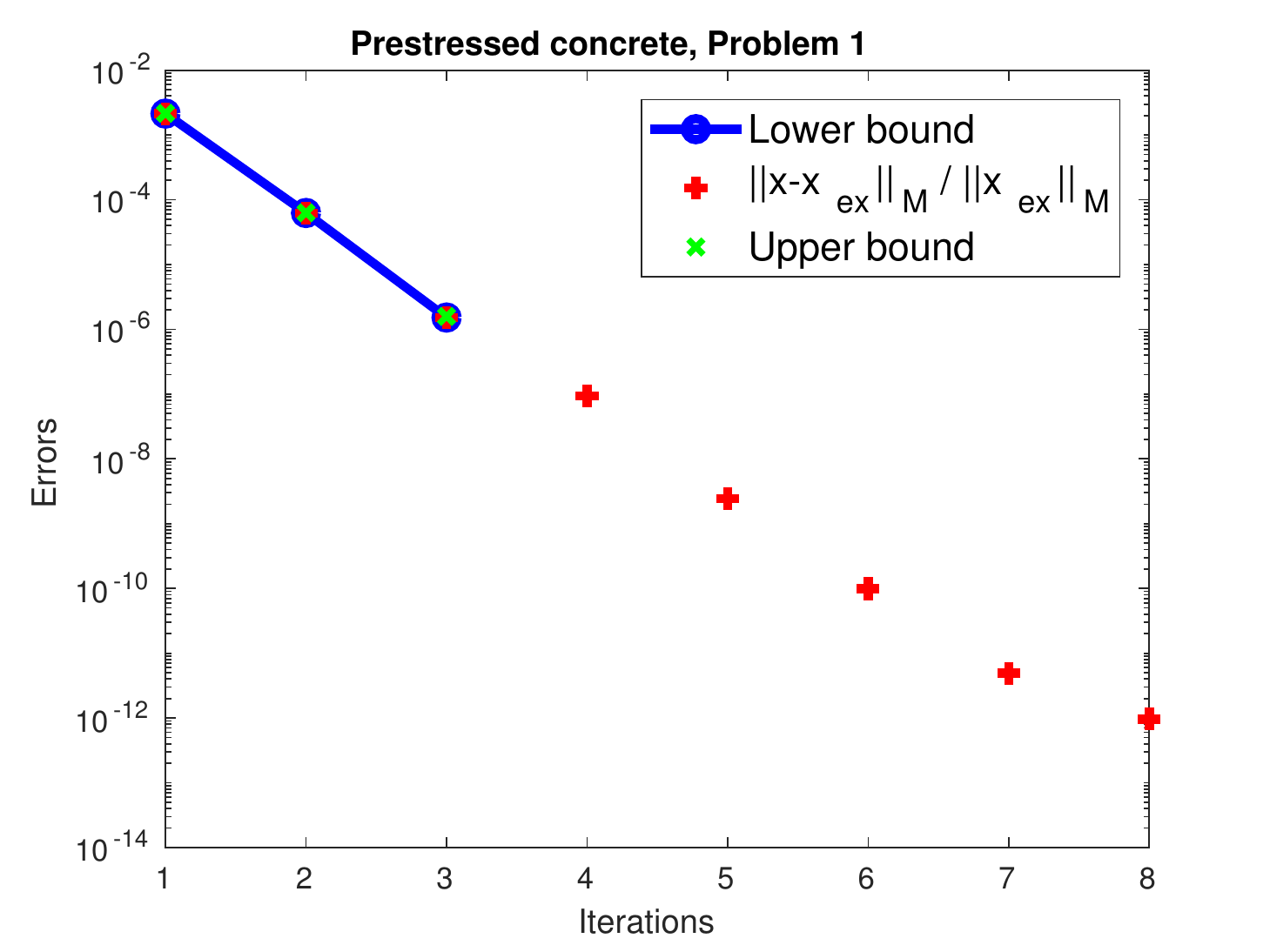}
\includegraphics[width=6.4cm]{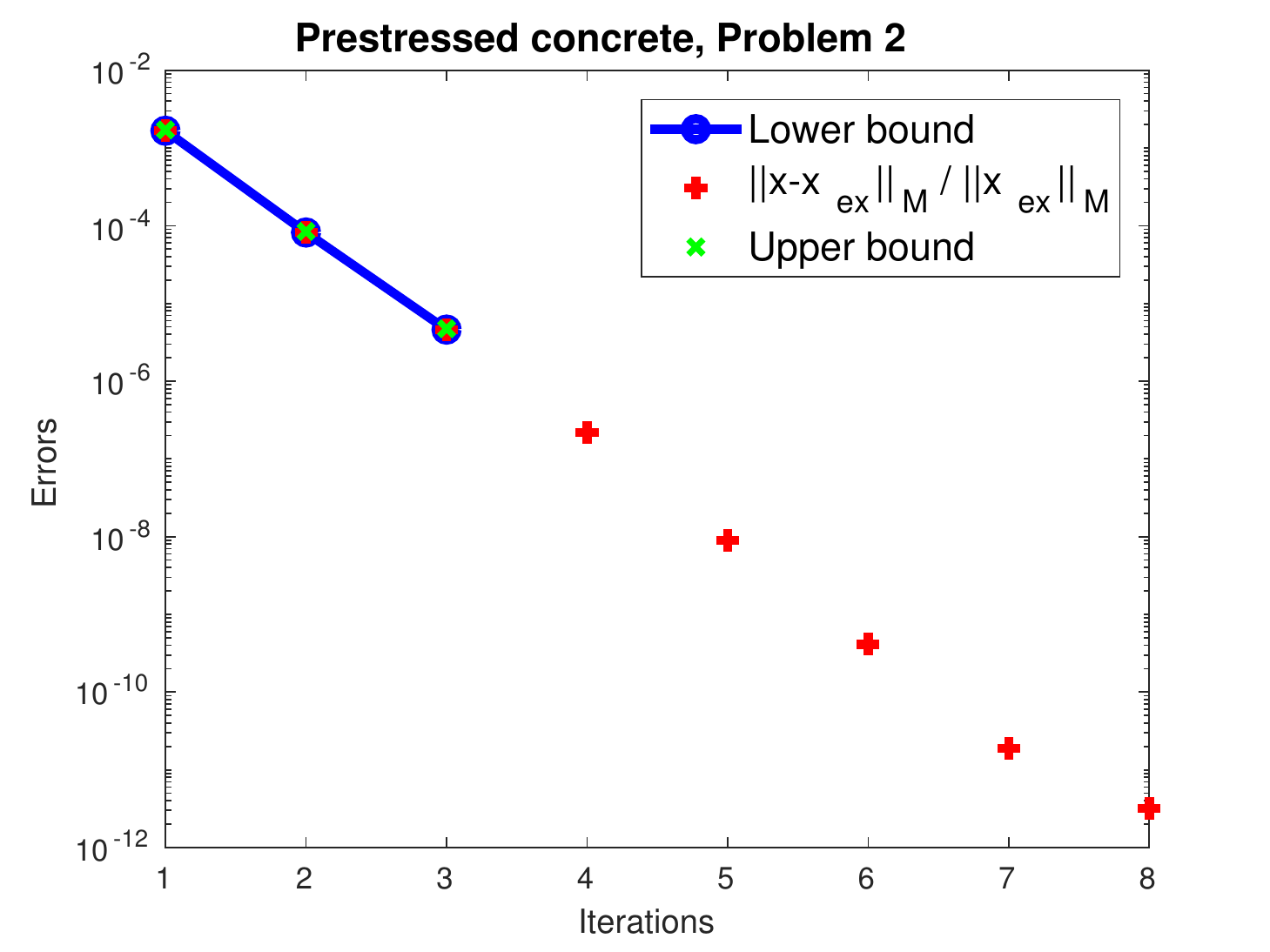}
\end{center}
\caption{GKB convergence for Problem 1 and 2.}
\label{fig:PSB:errUL12}
\end{figure}

\begin{figure}
\begin{center}
\includegraphics[width=6.4cm]{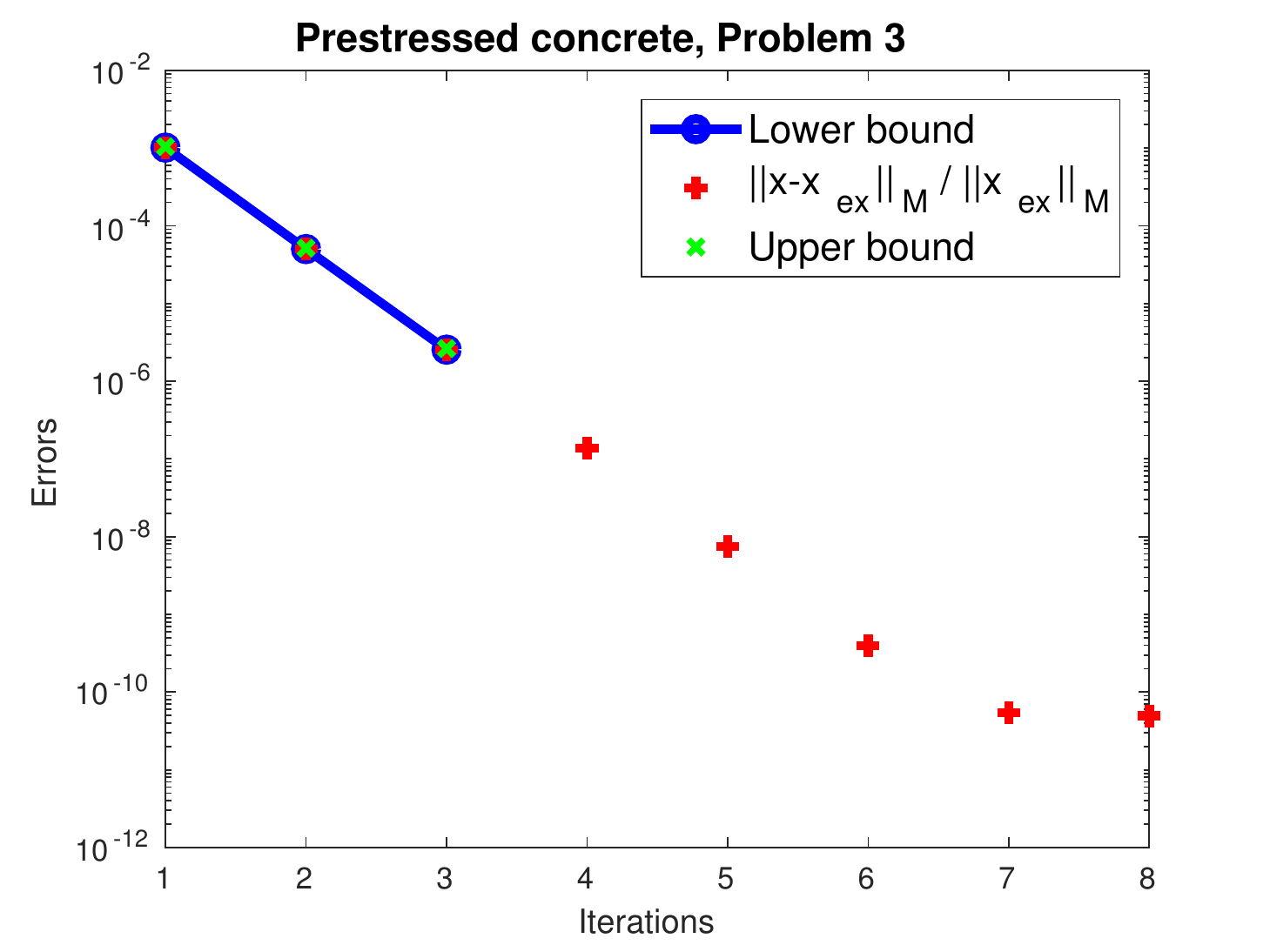}
\end{center} 
\caption{Convergence of generalized GKB method for Problem 3.}
\label{fig:PSB:errUL3}
\end{figure}

\section{Large scale example and parallel implementation}\label{sec:containment}
In this example, we study a critical industrial application, the structural analysis of the reactor containment building of a nuclear power plant. The structure is set under compression during the construction phase, such that it resists better outer influences. The containment building additionally consists of an outer shell layer. The model thus requires the coupling of three dimensional elements (the concrete), two dimensional elements (the outer shell) and one dimensional elements representing the metallic prestressing cables (\cref{fig:containment_building}). The underlying equations for each material are those of linear elasticity.

\begin{figure}
\begin{center}
\includegraphics[width=13.0cm]{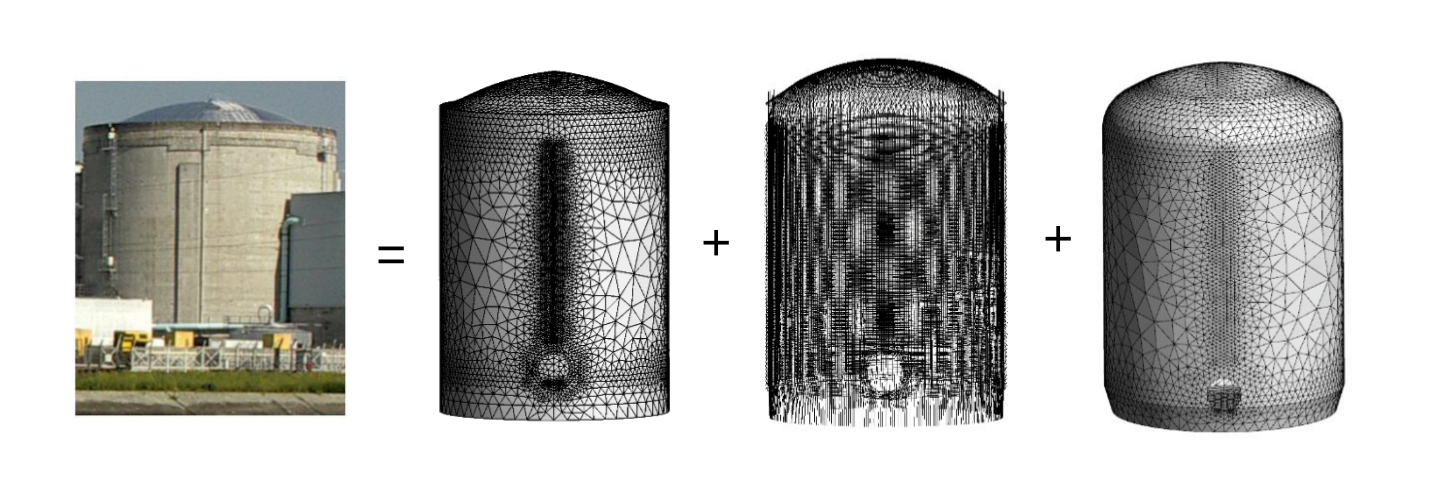}
\end{center}
\caption{Modeling of a containment building.}
\label{fig:containment_building}
\end{figure}

\subsection{Numerical Experiments}
The matrix is generated by code\_aster. The discretization is illustrated in \cref{fig:containment_building} and the blocks are of size $m = 283797$ and $n = 158928$. The number of constraints is thus more than 50\% of the number of physical degrees of freedom. We apply the permutations as explained in the above-mentioned example in \cref{sec:prest_concrete} and obtain the matrix presented in \cref{fig:containment_matrix}. The augmented system contains row and columns with only zero entries in the (1,1)-block, but again \cref{eqn:WKerAt} holds and the nonsingularity of \cref{eqn:augsys} is ensured by the constraint matrix.\\
\\
We implement the Golub-Kahan bidiagonalization method in Julia \cite{Bezanson2014} and we use the interface to the parallel direct solver MUMPS \cite{MUMPS:1} from the JuliaSmoothOptimizers package \footnote{\url{https://github.com/JuliaSmoothOptimizers}} to solve the inner linear system. The factorization of the system matrix is done once. The right-hand side is provided by code\_aster and the exact solution is obtained for comparison by solving \cref{eqn:doubleLg} with MUMPS. As in the previous example, we scale the augmented system \cref{eqn:extractedSystem} with the factor $\gamma = \frac{1}{2}(\min\Wm_{ii} + \max\Wm_{ii})$. The GKB method is not directly applicable to the augmented system \cref{eqn:augsys} with a singular (1,1)-block. For this reason, but also to obtain an improved convergence for the GKB method, we apply the augmented Lagrangian approach with $\eta = \| \Wm\|_1$. Again we use the tolerance $\tau=10^{-5}$ for the lower bound stopping criterion of the GKB method and $d=5$. We apply the algorithm to the unpermuted system as it is obtained from code\_aster. \\

\begin{figure}
\begin{center}
\includegraphics[width=9.0cm]{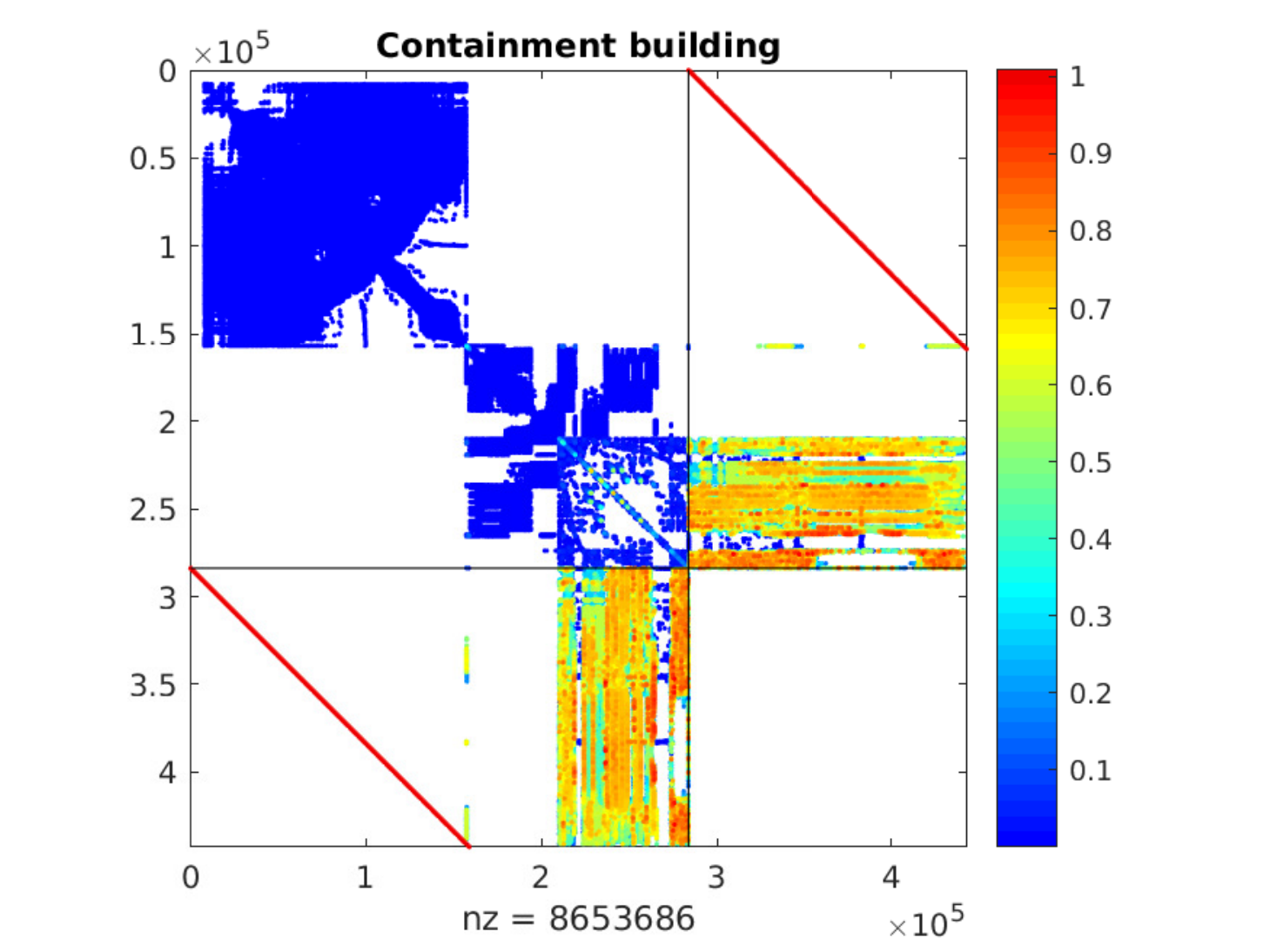}
\end{center}
\caption{Augmented system after permutation and scaling. }
\label{fig:containment_matrix}
\end{figure}

\begin{figure}
\begin{center}
\includegraphics[width=9.0cm]{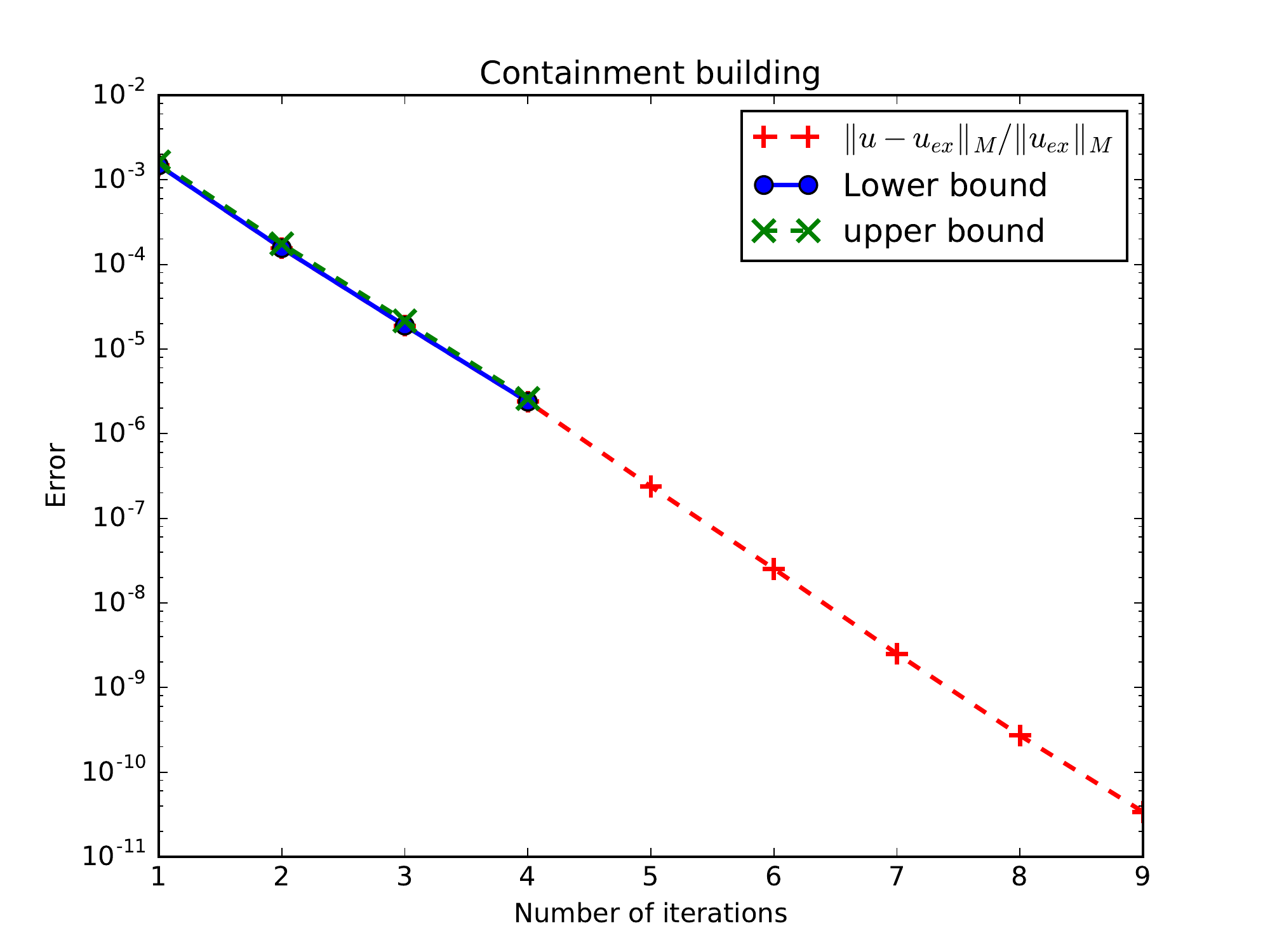}
\end{center}
\caption{GKB convergence.}
\label{fig:containment}
\end{figure}

The algorithm stops after 9 iterations and the relative errors of $\uv$ and $\pv$ are summarized in \cref{tab:containment}. The upper and lower bound estimates are presented in \cref{fig:containment}. Here, the lower bound for the smallest singular values as needed for the upper bound has been estimated numerically as $a = 0.2$.\\
\\
Also for this realistic industrial test case, the GKB iterative method converges after only 9 $(4+d)$ iterations. The final errors obtained in the energy and 2-norm for the solution $\uv$ and also the Lagrange multipliers $\pv$ are remarkably small. In fact, they are by several orders of magnitude better than the required stopping tolerance. Furthermore, we reduce the problem of solving a matrix of size $m + 2n$ (as currently implemented in code\_aster), to solve a linear system of size $m$. We compared the efficiency of the proposed GKB iterative method to solving \cref{eqn:doubleLg} directly with MUMPS under the same conditions. A complete performance analysis of the algorithm is outside the scope of this paper, but a preliminary study shows that in sequential simulations speedups of a factor between 2 and 3 can be observed.
\begin{table}
\caption{Golub-Kahan convergence for the containment building example and $\epsilon_{GKB}$=1e-5 and $d=5$}
\begin{center}
\begin{tabular}{|c|c|c|c|c|c|}
\hline
m & n & \#Iter & $\frac{\| \uv - {\uv}^{(k)} \|_{\Mm}}{\| \uv \|_{\Mm}} $ & $\frac{\| \uv - {\uv}^{(k)} \|_{2}}{\|\uv\|_2}$ &  $\frac{\| \pv -{\pv}^{(k)} \|_{2}}{\|\pv\|_2}$\\
\hline
283797 & 158928 & 9 & 3.39e-11 & 3.8e-11 & 1.4e-9\\
\hline
\end{tabular}
\end{center}
\label{tab:containment}
\end{table}

\section{Conclusions}

In this work, we presented an algorithm based on the Golub-Kahan bidiagonalization method and applied it to problems in structural mechanics. These problems exhibit the difficulty of multi-point constraints imposed on the discretized finite element formulation. We showed that the GKB algorithm converges in only a few iterations for each of the three classes of test problems. In particular, we confirmed our main result of \cref{thm:eta}: The number of GKB iterations is independent of the discretization size for a given problem, whenever we choose the stabilization parameter $\eta$ appropriately. This has also been true for the example of a block of prestressed concrete, although the leading block is singular and does not satisfy the requirements of \cref{thm:eta}. The errors obtained for the solutions $\uv$ and $\pv$ are remarkably small and since the lower bound of the error at iteration $k$ can only be computed at $k+d$, they  undershoot the required tolerance by several orders of magnitude. Summarizing, the proposed algorithm presents a new alternative to the more commonly used standard iterative solvers and, in particular, the ones provided currently in code\_aster. 
\\
\\
The final example of the reactor containment building is a realistic application. However, the dimensions of the matrices are still relatively small. For other applications, the number of degrees of freedoms might be in the order of millions, when also the inner direct solver MUMPS will no longer be satisfactory. It is thus indispensable to solve the inner linear system defined by $\Mm$ with an iterative scheme, which results in an inner-outer iterative method. The study of such algorithms will be the subject of future work.

\bibliographystyle{siamplain}
\bibliography{ind}

\begin{thebibliography}{10}

\bibitem{Abbas2013}
{\sc M.~Abbas}, {\em La m\'{e}thode des \'{e}l\'{e}ments finis
  isoparam\'{e}triques}, tech. report, EDF R\&D, 2013.
\newblock R3.01.00.

\bibitem{Abel1979}
{\sc J.~F. Abel and M.~S. Shephard}, {\em An algorithm for multipoint
  constraints in finite element analysis}, International Journal for Numerical
  Methods in Engineering, 14, pp.~464--467,
  \url{https://doi.org/10.1002/nme.1620140312},
  \url{https://onlinelibrary.wiley.com/doi/abs/10.1002/nme.1620140312},
  \url{https://arxiv.org/abs/https://onlinelibrary.wiley.com/doi/pdf/10.1002/nme.1620140312}.

\bibitem{MUMPS:1}
{\sc P.~Amestoy, I.~Duff, J.~L'Excellent, and J.~Koster}, {\em A fully
  asynchronous multifrontal solver using distributed dynamic scheduling}, SIAM
  Journal on Matrix Analysis and Applications, 23 (2001), pp.~15--41,
  \url{https://doi.org/10.1137/S0895479899358194},
  \url{https://doi.org/10.1137/S0895479899358194},
  \url{https://arxiv.org/abs/https://doi.org/10.1137/S0895479899358194}.

\bibitem{Ar2013}
{\sc M.~Arioli}, {\em Generalized {G}olub--{K}ahan bidiagonalization and
  stopping criteria}, SIAM Journal on Matrix Analysis and Applications, 34
  (2013), pp.~571--592, \url{https://doi.org/10.1137/120866543},
  \url{https://doi.org/10.1137/120866543},
  \url{https://arxiv.org/abs/https://doi.org/10.1137/120866543}.

\bibitem{Benbow1999}
{\sc S.~Benbow}, {\em Solving generalized least-squares problems with lsqr},
  SIAM Journal on Matrix Analysis and Applications, 21 (1999), pp.~166--177,
  \url{https://doi.org/10.1137/S0895479897321830},
  \url{https://doi.org/10.1137/S0895479897321830},
  \url{https://arxiv.org/abs/https://doi.org/10.1137/S0895479897321830}.

\bibitem{BeGoLi2005}
{\sc M.~Benzi, G.~H. Golub, and J.~Liesen}, {\em Numerical solution of saddle
  point problems}, Acta Numerica, 14 (2005), p.~1–137,
  \url{https://doi.org/10.1017/S0962492904000212}.

\bibitem{Bernardi1989}
{\sc C.~Bernardi, Y.~Maday, and A.~T. Patera}, {\em A new nonconforming
  approach to domain decomposition : the mortar element method}, Nonlinear
  partial equations and their applications,  (1994), pp.~13--51.

\bibitem{Bezanson2014}
{\sc J.~Bezanson, A.~Edelman, S.~Karpinski, and V.~Shah}, {\em Julia: A fresh
  approach to numerical computing}, SIAM Review, 59 (2017), pp.~65--98,
  \url{https://doi.org/10.1137/141000671},
  \url{https://doi.org/10.1137/141000671},
  \url{https://arxiv.org/abs/https://doi.org/10.1137/141000671}.

\bibitem{GoKa1965}
{\sc G.~Golub and W.~Kahan}, {\em Calculating the singular values and
  pseudo-inverse of a matrix}, Journal of the Society for Industrial and
  Applied Mathematics Series B Numerical Analysis, 2 (1965), pp.~205--224,
  \url{https://doi.org/10.1137/0702016}, \url{https://doi.org/10.1137/0702016},
  \url{https://arxiv.org/abs/https://doi.org/10.1137/0702016}.

\bibitem{GoGr2003}
{\sc G.~H. Golub and C.~Greif}, {\em On solving block-structured indefinite
  linear systems}, SIAM Journal on Scientific Computing, 24 (2003),
  pp.~2076--2092, \url{https://doi.org/10.1137/S1064827500375096},
  \url{https://doi.org/10.1137/S1064827500375096},
  \url{https://arxiv.org/abs/https://doi.org/10.1137/S1064827500375096}.

\bibitem{GoMeu2010}
{\sc G.~H. Golub and G.~Meurant}, {\em Matrices, Moments and Quadrature with
  Applications}, Princeton University Press, 2010,
  \url{http://www.jstor.org/stable/j.ctt7tbvs}.

\bibitem{jendele2009}
{\sc L.~Jendele and J.~Červenka}, {\em On the solution of multi-point
  constraints – application to fe analysis of reinforced concrete
  structures}, Computers and Structures, 87 (2009), pp.~970 -- 980,
  \url{https://doi.org/https://doi.org/10.1016/j.compstruc.2008.04.018},
  \url{http://www.sciencedirect.com/science/article/pii/S0045794908001351}.
\newblock Computational Structures Technology.

\bibitem{OrAr2017}
{\sc D.~Orban and M.~Arioli}, {\em Iterative Solution of Symmetric
  Quasi-Definite Linear Systems}, SIAM Spotlights, Society for Industrial and
  Applied Mathematics, 2017,
  \url{https://books.google.fr/books?id=Z2xUMQAACAAJ}.

\bibitem{PaSa1982}
{\sc C.~C. Paige and M.~A. Saunders}, {\em {LSQR}: An algorithm for sparse
  linear equations and sparse least squares}, ACM Trans. Math. Softw., 8
  (1982), pp.~43--71, \url{https://doi.org/10.1145/355984.355989},
  \url{http://doi.acm.org/10.1145/355984.355989}.

\bibitem{Pe2011}
{\sc J.~Pellet}, {\em Conditions de liaison de corps solide}, tech. report, EDF
  R\&D, 2011.
\newblock R3.03.02.

\bibitem{Pe2011_01}
{\sc J.~Pellet}, {\em Dualisation des conditions aux limites}, tech. report,
  EDF R\&D, 2011.
\newblock R3.03.01.

\bibitem{Saad2003}
{\sc Y.~Saad}, {\em Iterative Methods for Sparse Linear Systems}, Society for
  Industrial and Applied Mathematics, second~ed., 2003,
  \url{https://doi.org/10.1137/1.9780898718003},
  \url{https://epubs.siam.org/doi/abs/10.1137/1.9780898718003},
  \url{https://arxiv.org/abs/https://epubs.siam.org/doi/pdf/10.1137/1.9780898718003}.

\bibitem{stgeorges98}
{\sc P.~Saint-Georges, Y.~Notay, and G.~Warzée}, {\em Efficient iterative
  solution of constrained finite element analyses}, Computer Methods in Applied
  Mechanics and Engineering, 160 (1998), pp.~101 -- 114,
  \url{https://doi.org/https://doi.org/10.1016/S0045-7825(97)00286-7},
  \url{http://www.sciencedirect.com/science/article/pii/S0045782597002867}.

\bibitem{Sa1995}
{\sc M.~A. Saunders}, {\em Solution of sparse rectangular systems using {LSQR}
  and {CRAIG}}, BIT Numerical Mathematics, 35 (1995), pp.~588--604,
  \url{https://doi.org/10.1007/BF01739829},
  \url{https://doi.org/10.1007/BF01739829}.

\bibitem{Sa1997}
{\sc M.~A. Saunders}, {\em Computing projections with {LSQR}}, BIT Numerical
  Mathematics, 37 (1997), pp.~96--104,
  \url{https://doi.org/10.1007/BF02510175},
  \url{https://doi.org/10.1007/BF02510175}.

\end{thebibliography}

\end{document}